\documentclass[11pt]{article}
\usepackage[utf8]{inputenc}
\usepackage{csquotes}
\usepackage{setspace}
\usepackage[dvipsnames]{xcolor}
\usepackage[margin=1in]{geometry}
\usepackage{hyperref}
\usepackage{amsmath, amssymb, amscd, amsthm, amsfonts}
\usepackage{graphicx}
\usepackage{adjustbox}
\usepackage{hyperref}
\usepackage{cleveref}
\usepackage{amsthm}
\usepackage[utf8]{inputenc}
\usepackage[english]{babel}
\usepackage{proof}
\usepackage{ninecolors}

\usepackage{proof}
\usepackage{tikz-cd}
\usepackage{xcolor}
\usepackage{thmtools, thm-restate}
\tikzcdset{scale cd/.style={every label/.append style={scale=#1},
		cells={nodes={scale=#1}}}}

\newtheorem{theorem}{Theorem}[section]

\newtheorem{lemma}[theorem]{Lemma}

\theoremstyle{definition}
\newtheorem{definition}{Definition}[section]

\title{Simplicial Semantics for Belief Revision}
\author{Philip Sink}
\date{}

\newcommand{\M}{\mathcal{M}}

\newcommand{\commentout}[1]{}
\newcommand{\defin}[1]{\textbf{#1}}

\renewcommand{\phi}{\varphi}

\newcommand{\lthen}{\rightarrow}

\newcommand{\F}{\mathcal{F}}
\renewcommand{\M}{\mathcal{M}}
\renewcommand{\L}{\mathcal{L}}

\usepackage{amssymb,tikz}

\begin{document}
	
	\maketitle
	
\begin{abstract}
	This paper will give a definition of belief revision within simplicial semantics. Starting from the work presented in \cite{BelSimp} as a baseline, this paper modifies the semantics for belief given there to allow atomic formulae to be assigned to nodes, not facets. Conceptually and philosophically, this version of the semantics is better suited if one wishes to interpret the nodes of a simplicial model of epistemic logic as a ``perspective'' assigned to a particular agent. If nodes are perspectives, it then follows that worlds, or the facets of a simplicial model, are composed themselves of perspectives. This allows us to say that two worlds are more similar, or ``nearer'', if they share more perspectives. With this conceptual notion of nearness in hand, two different formal presentations of revision are given. We conclude by exploring some conceptual pitfalls surrounding these definitions under iterated revision, and motivate a few potential solutions that involve giving the agents a ``memory'' of what has been announced so far.
\end{abstract}

\section{Introduction} \label{2sec:int}

Semantics for epistemic logic in simplicial complexes has been an ongoing area of research in the last few years.  \cite{Death,DoA,KaSC,SimpDEL,HG,FA,SimpSet,SimpBel,BelSimp} Some of the earliest work in this area explored connections between simplicial semantics and Dynamic Epistemic Logic, particularly public announcements. \cite{SimpDEL} Recent work has attempted to use this setting to model belief. \cite{SimpBel,BelSimp,KaSC} What is relevant about belief, as compared to knowledge, is that belief is not generally taken to be factive. That is, the axiom schema \textbf{T}, often called factivity, is not sound. As a result, agents are able to have false beliefs. It may be the case that for some world $w$, and some formula $\varphi$, we have that $B_a\varphi$ is true (read ``agent $a$ believes that $\varphi$ is true''), but $\varphi$ is false.

This paper will explore the overlap between false beliefs and public announcements in simplicial semantics. In general, if $\varphi$ is publicly announced, then any world $w$ which agent $a$ considers possible, they cease to consider possible. So, if agent $a$ believes $\varphi$, and then $\neg\varphi$ is announced, then agent $a$ no longer considers any worlds possible. Their beliefs have become defunct. Belief revision is an alternative way of modeling how agents learn information in an announcement which avoids this issue. \cite{Lewis2,Grove,AGM} The main idea in revision is that some worlds are ``near'' to each other. If agent $a$ considers a world $w$ possible which satisfies $\varphi$, and $\neg\varphi$ is announced, the agent may ``replace'' or ``revise'' $w$ to a ``nearby'' world $w'$ that satisfies $\neg\varphi$.

The main issue with this notion of nearness is that it is arbitrary, in practice, to determine what worlds should be nearer or further from each other. We believe that simplicial semantics offers a solution here. In simplicial semantics, it is often assumed that worlds are composed of a simpler epistemic structure, namely ``perspectives''. \cite{HG,KaSC} If this is true, we can say that two worlds are nearer to each other if they share more perspectives. this paper defines a notion of revision in simplicial semantics that operates on precisely this notion. In Section \ref{2sec:persp}, we outline the epistemology of ``perspectives'' that we will appeal to in this paper. In Section \ref{2sec:bel}, we briefly outline the semantics for belief that this paper will appeal to, which is a modified form of the semantics in \cite{BelSimp}. In Section \ref{2sec:revinf}, we give a detailed motivation of the concept of nearness at play in this paper, and we formalize it in Section \ref{2sec:revform}. We then proceed to give some examples, motivate a potential issue that arises if we assume announcements are indefeasible, and offer two potential solutions.

\section{The ``Perspective'' Perspective}\label{2sec:persp}

The basic epistemic unit that we will start with we will call a ``perspective'' or ``point of view,'' very similar to the idea outlined in \cite{HG}. Given a particular agent, let's call her Alice, the perspective of that agent consists abstractly of (some of) the information available to that agent. For example, if the agent is sitting on a park bench, and sees a blue sky and ducks in the pond, all of this information is what we might consider part of the agent's perspective. This is indeed an abstract notion, but crucial to us is that the information constituting perspectives can be broken down into two pieces. The first is the \textit{hard} information, or the information that is in some sense indefeasible. In this scenario, where the agent on the park bench sees ducks, we consider the proposition representing this fact hard information. By contrast, suppose Alice has a friend, Barbara, who regularly goes to the park with her. Each time the pair has been to the park, they see ducks; however today, her friend is home sick. Alice still goes to the park and sees ducks, but Barbara of course does not. In this example, we will assume that both believe that there are ducks in the pond. However, for Alice, this is hard information. By contrast, Barbara believes there are ducks in the pond in a weaker sense. This is an example of \textit{weak} information.

The line between hard and weak information is only intuitive. In fact, the distinction only matters as a modeling convention. In certain modeling contexts, it will be useful to represent some information as indefeasible, and other information as defeasible. That is all the distinction between \textit{hard} and \textit{weak} is meant to cache out. It rests solely on the intuition that an agent's point of view is made up of facts that are immediately obvious to the agent, and other facts that are less so. As a technical and simplifying assumption, we will assume that the hard information is given by a possibly empty set of literals associated with that perspective.

Since we are taking perspectives as our foundational epistemic object, how can we use them to model uncertainty? One again imagines our agent at the pond with the ducks. Call her agent $a$. Her friend will be agent $b$. If $D$ is the logical atom that means ``The ducks are in the pond at the park,'' then we can say that $D$ is assigned to $a$'s perspective, call it $a_1$. $b$'s perspective does not, in this language, consist of any hard information, and so we associate $b_1$ with the empty set. Since these perspectives are true simultaneously, we can group them together as the set $\{a_1,b_1\}$, as drawn in Figure \ref{2fgr:ex1}

\begin{figure}[htbp]
	\begin{center}
		\begin{tikzcd}
			a_1(D) \arrow[r, no head] & b_1
		\end{tikzcd}
		\caption{A model with two perspectives and one world/facet.}\label{2fgr:ex1}
		
	\end{center}
\end{figure}

When we formally define simplicial semantics later, it will be clear that Figure \ref{2fgr:ex1} is a (very basic) simplicial complex. For readers unfamiliar with the exact workings of simplicial semantics, one should look at \cite{KaSC} and \cite{BelSimp}. In order to read the above, for now, it suffices to recognize that Figure \ref{2fgr:ex1} is equivalent to a Kripke model, with one world where $D$ is true, and two reflexive edges on that world, one for each agent. The single edge between the two perspectives in Figure \ref{2fgr:ex1} represents the set containing those two perspectives. This set with the two perspectives is the possible world, and each agent recognizes that world in their perspective, hence the reflexive edges.

Now let's imagine a slightly more complicated example. Both friends are going to the park, but do not see each other yet. We will consider this from $a$'s point of view. She sees the ducks in the pond, but is uncertain whether or not $b$ sees the ducks yet. She cannot imagine that $b$ sees there are not ducks in the pond, as that would be inconsistent with her own experience. But, she cannot tell whether or not $b$'s perspective is one where they are, in fact, seeing ducks, or they have not seen the ducks yet. The latter is the same as $b_1$ as above, but the former, which we call $b_2$, will have $D$ associated with it, as seen in Figure \ref{2fgr:ex2}.

\begin{figure}[htbp]
	\begin{center}
		\begin{tikzcd}
			b_2(D) \arrow[red, r, no head] &  a_1(D) \arrow[red, r, no head] & b_1
		\end{tikzcd}
		\caption{A model with three perspectives and two worlds/facets.}\label{2fgr:ex2}
		
	\end{center}
\end{figure}

Figure \ref{2fgr:ex2} is equivalent to a Kripke model with two worlds, one for each set, shown as edges. The worlds are indistinguishable to agent $a$, and all reflexive edges for both agents are present. In the case of two agents, we will refer to sets containing two perspectives as \textit{facets}. The facets in Figure \ref{2fgr:ex2} are red because they are associated with agent $a$. That is, these are facets, or worlds, that $a$ considers possible. In general, in this paper, we will assume that $a$ is red, $b$ is blue, and $c$ is green.

\section{Simplicial Semantics for Belief} \label{2sec:bel}

This paper will work in a slightly modified version of the semantics for belief, as presented in \cite{BelSimp}. We will reproduce here the relevant formal details for those unfamiliar, as well as show how we need to modify the semantics for our purposes. The modification itself is similar. The semantics in \cite{BelSimp} functions by assigning atomic propositions directly to the facets of the simplicial complexes. This follows the example of much of the previous literature. \cite{Death,FA,SimpSet} By contrast, here we will assign atomic propositions to the nodes of the simplicial complexes. This, too, has a great deal of precedent in the literature. \cite{DoA,KaSC,SimpDEL,HG} The reason for this change is that the notion of revision which we set out to motivate only makes sense if we can think of the individual nodes of a simplicial complex as \emph{perspectives} or ``points of view'' belonging to a particular agent. That is, a node delineates what an agent is or is not ``seeing'', or is aware of, in some immediate sense. We will return to this in Section \ref{2sec:revform}.

Let $\mathfrak{P}$ be a countable set of propositional atoms, and $Ag$ a finite set of agents. Let $N$ be a set of of \emph{nodes}, $V:N\rightarrow Ag$ a function which assigns each node to an agent, called the \emph{coloring} function, and $L:N\rightarrow 3^\mathfrak{P}$ a function which assigns each node to a set of literals, which we call the \emph{assignment}. The interpretation is that $L(n)(P)=1$ if and only if $P$ is associated with $n$, $L(n)(P)=0$ if and only if $\neg P$ is associated with $n$, and $L(n)(P)=2$ if and only if neither is associated with $n$. $L$ is the only difference between their presentation here as compared to \cite{BelSimp}. As mentioned above, $L$ assigns propositions to nodes.

Our first key idea is that we can use $N$, $V$, and $L$ to create a kind of \defin{maximal} simplicial complex. Like much of the previous literature, we will assume our simplicial complexes are uniquely colored. Specifically, each facet of our complexes has a dimension of size $|Ag|$, and no two nodes are associated to the same agent. Put formally, if $X\in S$ is such that for all $Y\in S$, if $X\subseteq Y$ then $X=Y$, we have that $|X|=|Ag|$, and, if $x,y\in X$, then if $V(x)=V(y)$, we have that $x=y$. We call this condition UCF for ``Uniquely Colored Facets''. The \defin{Maximal Complex} is the set of subsets of $N$ such that the associated logical content with that subset is consistent, and it satisfies the UCF condition. That is, it's the subsets $x\subseteq N$ such that $|x|=|A|$, $\bigcup_{u\in x}L(u)$ is consistent, and for all $u,v\in x$, $V(u)\neq V(v)$. We refer to $\mathfrak{M}(N,V,L)$ as the \defin{Maximal Complex} of $N$, $V$, and $L$. When the context is clear, we will refer to it simply as the maximal complex. The maximal complex can be defined set theoretically as follows:

\begin{align*}
	\mathfrak{M}(N,V,L):=\{y\in 2^N~|~\exists x\in 2^N (&y\subseteq x\\&\wedge\neg(\exists P\in\mathfrak{P}(\exists u,v\in x(L(u)(P)=1\wedge L(v)(P)=0))) \\&\wedge (|x|=|A|)
	\\&\wedge (\forall u,v\in x(V(u)\neq V(v))))\}
\end{align*}

Note that $\mathfrak{M}(N,V,L)$ is a simplicial complex. This is because, if $Y\in \mathfrak{M}(N,V,L)$, there is a set $X$ such that $Y\subseteq X$ and $X$ satisfies certain properties. If $Z\subseteq Y$, then the same set $X$ suffices to show that $Z\in \mathfrak{M}(N,V,L)$. All complexes we consider in this paper will be UCF subcomplexes of the maximal complex. 

If $S$ is a simplicial complex, let $\mathcal{F}(S)$ denote the facets\footnote{Maximal faces under the subset ordering.} of $S$. Since all simplicial complexes we consider are UCF, we can define projection functions $\pi_a:\mathcal{F}(S)\rightarrow N$ given by $\pi(a)(X)=V^{-1}(a)\cap X$. That is, $\pi_a$ maps each facet to the unique $a$-colored node it contains.

We say a \defin{simplicial belief model} is a tuple $(N,V,L,S,\{S_a\}_{a\in Ag})$, where $N$, $V$, and $L$ are as above, $S$ is a UCE subcomplex of $\mathfrak{M}(N,V,L)$, and each $S_a$ is a UCF subcomplex of $S$. We will refer to each $S_a$ as the \emph{belief subcomplex} belonging to agent $a$.\footnote{The addition of multiple subcomplexes, one for each agent, is a modification to the standard simplicial semantics made to adapt those semantics for modeling belief. For more detail, see \cite{BelSimp}.} Often, we will leave $S$ unspecified. In this context, we assume that $S=\mathfrak{M}(N,V,L)$.\footnote{It is possible to interpret a knowledge modality $K_a$ on $S$ in the standard way such modalities are given in the literature. \cite{KaSC} This would have the property that knowledge implies belief, i.e., for all $a\in Ag$, $K_a\varphi\rightarrow B_a\varphi$, as in \cite{BelSimp}. We do not discuss this here in order to keep the focus on revision.} 

In this setting, the language $\L_{B}(Ag)$ recursively defined by
$$\varphi ::= P \, | \, \bot \, | \, \phi \lthen \psi \, | \, B_{a}\phi,$$
where $P \in \mathfrak{P}$ and $a \in Ag$, can be interpreted in simplicial models as follows for any $X\in\F(S)$:
\begin{align*}
	\M,X & \models P \text{ iff } \exists a\in Ag(L(\pi_a(X))(P)=1)\\
	\M,X & \nvDash \bot\\
	\M,X & \models \phi \lthen \psi \text{ iff } \M,X \models \phi \text{ implies } \M,X \models \psi\\
	\M,X & \models B_a \phi \text{ iff } \forall Y \in \F(S_a) \text{ if } \pi_a(Y) = \pi_a(X) \text{ then } \M,Y \models \phi
\end{align*}

Similar to what is described in \cite{HG}, we think of the nodes $N$, with coloring and assignment functions $V$ and $L$, as \emph{perspectives}. A perspective, as described in Section \ref{2sec:persp}, is an informal notion meant to capture information that is ``immediately'' available to the agent. It often is useful to think of the perspective as capturing those logical atoms whose truth value the agent is directly observing. So, if $P$ and $\neg R$ are assigned to an $a$-colored node $n$, we think of $n$ as representing the perspective where $a$ is directly observing that $P$ is true and $R$ is false.

Often in the literature, atoms are assumed to be assigned to the agents themselves, and thus an atom represents a fact about the particular agent it is assigned to. Such a setting is referred to as the language of \defin{Local Variables}. \cite{KaSC} In this setting, all atomic propositions belong to or are assigned to a particular agent, say agent $a$, and $a$-colored nodes fix the truth value of all $a$-colored propositions. More specifically, for each $a\in Ag$, fix a set of atoms $\mathfrak{P}_a$. We will refer to these as the atoms \defin{local} to agent $a$. The total collection of atoms will be $\bigcup_{a\in Ag}\mathfrak{P}_a$. Fix $P\in\mathfrak{P}_a$. Then for each $n\in N$, we have that $L(n)(P)\in 2$ and is only defined if $V(n)=a$. Finally, the semantic condition is modified so that $\M,X\models P$ iff $L(\pi_a(X))(P)=1$. In this way, the $a$-colored perspective of any facet fixes the truth value of all atoms in $\mathfrak{P}_a$.

There are many contexts in which presenting simplicial semantics this way is entirely appropriate. In particular, it's clear how to model scenarios where each agent has a privately held bit value, and is attempting to signal its value to other agents. \cite{DoA,SimpDEL} However, our presentation has the advantage of being slightly more general. Atomic formulae can be assigned to any agent's perspective, and so a given proposition doesn't have its truth value ``fixed'' by a particular agent at every world. This allows the methodology shown here to model a wider range of scenarios than the language of local variables. One such case is the classic example of the ``Muddy Children Problem''. \cite{DC5} We will present the case with three agents. Each agent, or child, can observe the muddiness of the other two children. So, agent $a$ has in their perspective either $M_b$ or $\neg M_b$, and either $M_c$ or $\neg M_c$. This gives twelve total perspectives, four for each agent. If one is familiar with the usual Kripke structure for this scenario, it's easy to check that at each world, every agent considers exactly two worlds possible. If we consider agent $a$, these are the world itself, and the world which differs only in the truth value of $M_a$. So, using the usual translation between Kripke models and simplicial models,\footnote{See, e.g., \cite{BelSimp}.} we have that each facet corresponding to a world is a triangle, and any two triangles share at most one perspective. Moreover, each perspective is present on exactly two triangles. The result is drawn in Figure \ref{2fgr:MCPS}. 

\begin{figure}[htbp]
	\begin{center}\adjustbox{scale=.9}{
			\begin{tikzcd}
				{a_2} \arrow[rr, no head] \arrow[rrrr, no head, bend left]      &                                                             & {c_0} \arrow[rd, no head] \arrow[rr, no head]           &                                                             & {b_2} \arrow[dd, no head] \arrow[dddd, no head, bend left]      \\
				& {a_0} \arrow[rr, no head] \arrow[ru, no head]      &                                                                  & {b_0} \arrow[dd, no head] \arrow[rd, no head]      &                                                                               \\
				{c_1} \arrow[ru, no head] \arrow[uu, no head]                   &                                                             &                                                                  &                                                             & {c_2} \arrow[ld, no head] \arrow[dd, no head]                   \\
				& {b_1} \arrow[uu, no head] \arrow[lu, no head] &                                                                  & {a_1} \arrow[ll, no head] \arrow[ld, no head] &                                                                               \\
				{b_3} \arrow[uu, no head] \arrow[uuuu, no head, bend left] &                                                             & {c_3} \arrow[lu, no head] \arrow[ll, no head] &                                                             & {a_3} \arrow[ll, no head] \arrow[llll, no head, bend left]
		\end{tikzcd}}
		\caption{A simplicial complex capturing the following conditions: All facets are triangles, every node is in exactly two facets, and there are twelve nodes.}\label{2fgr:MCPS}
		
	\end{center}
\end{figure}

Now we must assign atomic formulae to the individual notes. Recall that, in the Muddy Children Problem, every child can see every other child. Hence, every child can observe the muddiness of every other child. Following our intuition for hard information, this would mean that, for every $i\in Ag$, and $j,k\neq i$, both the truth value of $M_j$ and $M_k$ are \textit{hard} information for agent $i$. Hence, at each $i_n$, either $L(i_n)(M_j)=1$ or $L(i_n)(M_j)=0$, and the same for $M_k$. In particular, because $i$ is uncertain about $M_i$, $L(i_n)(M_i)=2$. We draw this in Figure \ref{2fgr:MCP}.

\begin{figure}[htbp]
	\begin{center}\adjustbox{scale=.9}{
			\begin{tikzcd}
				{a_2(M_b,\neg M_c)} \arrow[rr, no head] \arrow[rrrr, no head, bend left]      &                                                             & {c_0(M_a,M_b)} \arrow[rd, no head] \arrow[rr, no head]           &                                                             & {b_2(M_a,\neg M_c)} \arrow[dd, no head] \arrow[dddd, no head, bend left]      \\
				& {a_0(M_b,M_c)} \arrow[rr, no head] \arrow[ru, no head]      &                                                                  & {b_0(M_a,M_c)} \arrow[dd, no head] \arrow[rd, no head]      &                                                                               \\
				{c_1(\neg M_a,M_b)} \arrow[ru, no head] \arrow[uu, no head]                   &                                                             &                                                                  &                                                             & {c_2(M_a,\neg M_b)} \arrow[ld, no head] \arrow[dd, no head]                   \\
				& {b_1(\neg M_a,M_c)} \arrow[uu, no head] \arrow[lu, no head] &                                                                  & {a_1(\neg M_b,M_c)} \arrow[ll, no head] \arrow[ld, no head] &                                                                               \\
				{b_3(\neg M_b,\neg M_c)} \arrow[uu, no head] \arrow[uuuu, no head, bend left] &                                                             & {c_3(\neg M_a,\neg M_b)} \arrow[lu, no head] \arrow[ll, no head] &                                                             & {a_3(\neg M_b,\neg M_c)} \arrow[ll, no head] \arrow[llll, no head, bend left]
		\end{tikzcd}}
		\caption{A simplicial model capturing the initial configuration of the Muddy Children problem with 3 agents. Note that $S_a=S_b=S_c$. In general, when a face is present for every agent, we will draw the face as black.}\label{2fgr:MCP}
		
	\end{center}
\end{figure}

This configuration is surprising from the perspective of the language of local variables. \cite{KaSC} In this scenario, for every $M_i$, there must be an agent $j\in Ag$ such that for all $n\in\mathbb{N}$, $L(j_n)(M_i)=0$ or $L(j_n)(M_i)=1$. A natural first guess to what this might be is that $i=j$. That is, assign the proposition which stands for ``$i$ is Muddy'' to agent $i$. However, as we see in the argument above, this does not give the intended truth values for modal propositions. Indeed, such an assignment would make true $B_iM_i\vee B_i\neg M_i$ for each $i\in Ag$, which contradicts that in the initial configuration, $i$ must be \textit{uncertain} about the truth value of $M_i$. While this isn't borne out directly by the Muddy Children Problem, our system has the advantage that for a particular atom $P$ and any agent $i\in Ag$, it can be the case that there is a model $\M$ and a facet $X$ such that $\M,X\vDash \neg B_iP\vee\neg B_i\neg P$. That is, for any agent, there can be a facet where that agent is uncertain about the truth value of $P$. This cannot be the case in the language of local variables, where agents are always certain about the truth values of their local variables. This is the sense in which our system is more general than local variables.

We can spell out this greater generality more specifically. In the language of local variables, the following axiom schema, locality, is sound. Let $P\in\mathfrak{P}_a$:

$$\textbf{LOC}:P\rightarrow B_aP$$

If $P$ is local to agent $a$ and true, then $a$ believes that $P$. By contrast, in our setting, the following schema is sound for any $P\in\mathfrak{P}$:

$$\textbf{NU}:P\rightarrow\bigvee_{a\in Ag}B_aP$$

The name $\textbf{NU}$ stands for ``No Uncertainties''. The schema states that if an atom is true, then some agent believes it to be true. However, unlike in the language of local variables, which agent believes the atom to be true can vary from world to world.

It is clear that under the axioms of propositional logic, $\textbf{LOC}\vdash\textbf{NU}$. The converse is generally not true. Consider the model in Figure \ref{2fgr:NUnLOC}.

\begin{figure}[htbp]
	\begin{center}
		\begin{tikzcd}
			a_0(P) \arrow[d, no head] & b_0(P)                 \\
			b_1 \arrow[r, no head]    & a_1 \arrow[u, no head]
		\end{tikzcd}
		\caption{A simplicial belief model with two agents and three worlds/facets.}\label{2fgr:NUnLOC}
	\end{center}
\end{figure}

There are three facets, $\{a_0,b_1\}$, $\{a_1,b_0\}$, and $\{a_1,b_1\}$. Call these $w_0$, $w_1$, and $w_2$ respectively. Facets $w_0$ and $w_1$ are the only two facets where $P$ is true. However, at $w_0$, $B_a P$ is true but $B_b P$ is false. Conversely, at $w_1$, $B_aP$ is false and $B_bP$ is true. So, \textbf{LOC} is not valid in this model.

We have the following two theorems.

\begin{theorem}[Soundness and Completeness]\label{2Thm:Comp}
	The axiom schema $\textbf{K45+NU}$, plus the axioms of propositional logic and the inference rules necessitation and modus ponens are sound and complete with respect to simplicial belief models.
\end{theorem}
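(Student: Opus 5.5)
\emph{Soundness} is a direct check, and \emph{completeness} goes through a transformation of the canonical Kripke model into a simplicial belief model. For soundness, define for each $a\in Ag$ the induced accessibility relation $X\,R_a\,Y$ iff $Y\in\F(S_a)$ and $\pi_a(Y)=\pi_a(X)$; the clause for $B_a$ is then ordinary Kripke semantics over $R_a$. Since membership in the relation depends only on $\pi_a(X)$, $R_a$ is transitive and Euclidean, which validates $\textbf{4}$ and $\textbf{5}$. $\textbf{K}$, modus ponens and necessitation hold as usual. For $\textbf{NU}$: if $\M,X\models P$, pick $a$ with $L(\pi_a(X))(P)=1$. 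Every $Y$ with $\pi_a(Y)=\pi_a(X)$ contains this same node, so $\M,Y\models P$, and hence $\M,X\models B_aP$. Facets of $S$ have the right size by UCF, so $\pi_a$ is well-defined and no further case analysis is needed.

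For completeness I take the canonical model $\mathcal{C}$ of $\textbf{K45+NU}$: worlds are maximal consistent sets, with $\Gamma R_a\Delta$ iff $\{\varphi : B_a\varphi\in\Gamma\}\subseteq\Delta$. This relation is transitive and Euclidean, and $\textbf{NU}$ holds in every world. In particular, whenever $\Gamma R_a\Delta$ we have $R_a(\Gamma)=R_a(\Delta)$, and $B_a\varphi\in\Gamma$ iff $B_a\varphi\in\Delta$. Let $T_a(\Gamma)=\{P : P\in\Gamma,\ B_aP\in\Gamma\}$. The first attempt is to give each world $\Gamma$ an $a$-node $n_a(\Gamma)=(a,R_a(\Gamma),T_a(\Gamma))$ labelled with the positive literals in $T_a(\Gamma)$, and to set $X_\Gamma=\{n_a(\Gamma)\}_{a\in Ag}$. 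Positive labels are never inconsistent, so $X_\Gamma$ lies in the maximal complex. By $\textbf{NU}$, $P\in\Gamma$ iff some node of $X_\Gamma$ carries $P$. This gives the atomic case of a truth lemma $\M,X_\Gamma\models\varphi\iff\varphi\in\Gamma$, proved by induction on $\varphi$.

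The main obstacle is the modal step, because $R_a$ need not be reflexive. For $\Delta\in R_a(\Gamma)$ the node $n_a(\Delta)$ can differ from $n_a(\Gamma)$: they agree on $R_a$ but possibly not on $T_a$, since $P\in\Delta$ is automatic for $B_aP\in\Gamma$ while $P\in\Gamma$ is not. So I will not use the $X_\Delta$ themselves as the facets of $S_a$. For each pair $\Gamma R_a\Delta$ I introduce a fresh facet $Y^a_{\Gamma,\Delta}$. Its $a$-node is $n_a(\Gamma)$, and for each $b\neq a$ it contains a fresh copy of $n_b(\Delta)$ with the same label; these copies carry the perspectives of the other agents at $\Delta$.

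The intended truth lemma then has two clauses, one for the $X_\Gamma$ and one for the $Y^a_{\Gamma,\Delta}$, the latter saying $\M,Y^a_{\Gamma,\Delta}\models\varphi$ iff $\varphi\in\Delta$. To make the atomic case of the second clause work, the $a$-node must contribute exactly the atoms $P$ with $B_aP\in\Gamma$ that are not already supplied by the other agents. So I expect to refine the labelling, for instance by splitting $n_a(\Gamma)$ into an ``actual'' copy labelled $T_a(\Gamma)$ and a ``doxastic'' copy labelled $\{P : B_aP\in\Gamma\}$. The doxastic copy is the one used in $S_a$, and the $a$-node of $X_\Gamma$ is then chosen so that the $B_a$ clause quantifies over exactly the $Y^a_{\Gamma,\Delta}$.

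I expect the bookkeeping of this node splitting, so that $\pi_a$ sends actual and believed worlds to the same node while the atomic valuation still matches $\Gamma$, to be the delicate part. If it cannot be arranged directly, the fallback is to first unravel $\mathcal{C}$ into a tree-like $\textbf{K45}$ model, where such collisions are easier to control, and then apply the same construction. Once the truth lemma holds, any consistent $\varphi$ is satisfied at $X_\Gamma$ for a maximal consistent $\Gamma\ni\varphi$, which gives completeness.
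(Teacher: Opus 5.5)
Your soundness half is correct, and you have located exactly the right obstruction in the completeness half. It is not a bookkeeping problem, though, and it cannot be solved: the completeness claim is false as stated.

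\textbf{Why the obstruction is fatal.} In any simplicial belief model, the atoms carried by the node $\pi_a(X)$ are true at $X$ and equally true at every $Y$ with $\pi_a(Y)=\pi_a(X)$. Suppose some $Y\in\F(S_a)$ with $\pi_a(Y)=\pi_a(X)$ satisfies $P\wedge\bigwedge_{b\neq a}\neg B_bP$. The witness for $P$ at $Y$ cannot be a $b$-node with $b\neq a$, since that would force $B_bP$ at $Y$. So the witness is $\pi_a(X)$, and $P$ holds at $X$. Consequently the formula
\[
\neg P\wedge\neg B_a\bot\wedge B_a\Big(P\wedge\bigwedge_{b\neq a}\neg B_bP\Big)
\]
has no simplicial model.

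\textbf{Why it is nevertheless consistent.} It is \textbf{K45+NU}-consistent, as the following model shows.
\begin{itemize}
\item Take worlds $w,u,v$, with $R_a=\{(w,u),(u,u)\}$ and $R_b=\{(u,v),(v,v)\}$ for every $b\neq a$.
\item Let $P$ be true only at $u$, and every other atom false everywhere.
\item Every relation is transitive and Euclidean, and every instance of \textbf{NU} holds at every world (at $u$ via $B_aP$).
\item Hence every theorem of \textbf{K45+NU} is true throughout this model, yet the displayed formula holds at $w$.
\end{itemize}
For $|Ag|=1$ this reduces to the simpler fact that $\neg P\to B_a\neg P$ is valid on simplicial models but not derivable.

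So no choice of labels, node splitting, or unravelling can give you the truth lemma for $X_\Gamma$ and $Y^a_{\Gamma,\Delta}$ simultaneously. Your split fails concretely because $B_a$ at $X_\Gamma$ ranges only over facets whose $a$-node is literally $\pi_a(X_\Gamma)$. If the ``doxastic copy'' is a different node, $B_a$ no longer quantifies over the $Y^a_{\Gamma,\Delta}$. If it is the same node, its label becomes true at $X_\Gamma$.

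\textbf{The paper's proof has the same gap.} Its functor $G$ sets $\pi_a(X_w)=\pi_a(X_u)$ whenever $wR_au$. It then labels $\pi_a(X_w)$ with the atoms true at $w$ and throughout $R_a(w)$, but labels $\pi_a(X_u)$ with the atoms true throughout $R_a(u)=R_a(w)$. These two labels differ precisely when $w\models B_aP\wedge\neg P$, which is the same false-belief case you isolated. So $G$ is ill-defined, and the canonical simplicial model it is supposed to produce does not exist.

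A correct completeness theorem needs either a change to the semantics, or an axiomatization strong enough to prove validities such as $\neg B_a\neg\big(P\wedge\bigwedge_{b\neq a}\neg B_bP\big)\to P$.
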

\begin{proof}
	Appendix \ref{2prf:comp}
\end{proof}

A reasonable question to ask is why the seriality schema, \textbf{D}, is not sound. This is because it is possible for an $a$-colored perspective, call it $n$, to be such that there is no $F\in\F(S_a)$ such that $n\in F$. We will call such perspectives \defin{isolated}. It is easy enough to assume that no perspectives are isolated, in which case $D$ is indeed sound. However, our revision concept will take as input models with no isolated perspectives, and create as output models with isolated perspectives. We will explore this in detail in Section \ref{2sec:revform}.

\section{The Revision Concept} \label{2sec:revinf}

Belief revision is certainly not a new idea in the realm of formal philosophy. \cite{Grove,Lewis2,AGM,MAGM} The basic idea is quite simple. When an agent learns $\varphi$, rather than simply eliminate all those possible worlds which she formally considered possible, but now are ruled out because they satisfied $\neg\varphi$, she instead may replace some of those eliminated worlds with ``nearby'' worlds that satisfy $\varphi$. For example, suppose agent $a$ considers two worlds possible, $w_1$ and $w_2$. Suppose there are two worlds she does not consider possible, $w_3$ and $w_4$. Let $P$ be true at $w_1$, $w_3$, and $w_4$, and $Q$ be true at $w_2$ and $w_3$. Now suppose $a$ learns that $P$ is true. She now no longer considers $w_2$ possible. However, under revision, instead of merely eliminating $w_2$, she may replace it with a ``nearby'' world that satisfies $P$. Both $w_3$ and $w_4$ are candidates for replacement, then. However, one might further argue that $w_3$ is ``nearer'' to $w_2$ than $w_4$, as both $w_2$ and $w_3$ make true $Q$. In this case, after learning $P$, agent $a$ would switch from considering the set $\{w_1,w_2\}$ possible to considering $\{w_1,w_3\}$ possible. Call this Scenario 1. If, instead, one considered $w_3$ and $w_4$ equally near to $w_2$, then after learning $P$ she would consider $\{w_1,w_3,w_4\}$ possible. Call this Scenario 2. 

Many variations have been given on this basic premise. In the initial presentation by David Lewis, one can specify a ``nearness'' function that takes any world/formula pair to the nearest set of worlds satisfying the given formula.\cite{Lewis2} More specifically, $f:W\times\mathbf{Form}\rightarrow 2^W$ is a nearness function, where $\mathbf{Form}$ is the set of formulas.\footnote{This is not the notation that Lewis uses but it will be clearer for our purposes.} Under all forms of revision, if an agent starts out believing in the set of worlds $X$, then after learning $\varphi$, they believe the set $\bigcup_{w\in X}f(w,\varphi)$. In Scenario 1 above, $f(w_2,P)=\{w_3\}$ and $f(w_1,P)=\{w_1\}$.\footnote{It is typicially assumed that if a world $w$ satisfies a formula $\varphi$, $f(w,\varphi)=w$.} So, the agent revises her beliefs from $X=\{w_1,w_2\}$ to $\bigcup_{w\in X}f(w,P)=\{w_1,w_3\}$ In Scenario 2, $f(w_2,P)=\{w_3,w_4\}$ and $f(w_1,P)=\{w_1\}$. So, the agent revises her beliefs from $X=\{w_1,w_2\}$ to $\bigcup_{w\in X}f(w,P)=\{w_1,w_3,w_4\}$. Importantly, in this initial presentation, Lewis makes no restrictions on the nearness function. 

Grove, by contrast, does make some restrictions on nearness functions. \cite{Grove} For Grove, nearness is given by a system of nested spheres on the set of worlds $W$\footnote{Grove assumes that the set of worlds is the set of all maximal consistent sets of formulas. We will present his theory in a more general context here.}. More specifically, $\mathbf{S}\subseteq 2^W$ is a system of nested spheres if and only if:\footnote{$|\varphi|$ is the set of worlds satisfying $\varphi$.}

\begin{enumerate}
	\item $\mathbf{S}$ is totally ordered under subset.
	\item There is a nonempty set $X\in\mathbf{S}$ such that $X$ is minimal under the subset ordering.
	\item $W\in\mathbf{S}$
	\item For any formula $\varphi$, there is a set $V\in\mathbf{S}$ such that $|\varphi|\cap V\neq\emptyset$ and for all $U\in\mathbf{S}$ such that $|\varphi|\cap U\neq\emptyset$, $V\subseteq U$. That is, for any formula $\varphi$, there is a subset minimal sphere $V$ intersecting that formula's satisfaction set.\footnote{As before, this is not given in the original notation, but is clearer for our purposes.}
\end{enumerate}

The reader should note that the minimal set under the ordering, $X$, is conceptually important. $X$ is the set of worlds the agent starts out believing, before she learns any new information. From this, Grove defines the notion of the ``closest'' worlds in $W$ to $X$ in which a formula $\varphi$ holds. Let $\text{min}_\varphi(\mathbf{S})$ be the smallest ``sphere'' in $\mathbf{S}$ which contains worlds satisfying $\varphi$. That is, $\text{min}_\varphi(\mathbf{S})$ is the unique $V\in\mathbf{S}$ such that $|\varphi|\cap V\neq\emptyset$ and for all $U\in\mathbf{S}$, then if $|\varphi|\cap U\neq\emptyset$, then $V\subseteq U$. With this in hand, we say that the closest worlds in $W$ to $X$ satisfying $\varphi$ are given by 

$$f'(X,\varphi)=|\varphi|\cap\text{min}_\varphi(\mathbf{S})$$

That is, $f'(X,\varphi)$ is the set $V$ as specified in the 4th property of a system of nested spheres. $f'$ is not a nearness function as we have defined it, however. Ultimately, Grove appeals to the following nearness function:

$$f(w,\varphi)=\begin{cases}
	X\cap|\varphi|	& \text{if}\;X=f'(X,\varphi) \\
	f'(X,\varphi)	& \text{if}\;X\neq f'(X,\varphi)
\end{cases}$$

Note that $X=f'(X,\varphi)$ if and only if $X\cap|\varphi|\neq\emptyset$. So, the specified nearness function says that if the agent believes $\varphi$ is possible, the nearest world to $w$ is always those worlds in $X$ satisfying $\varphi$, namely $X\cap\varphi$. Otherwise, the agent shifts to those worlds satisfying $\varphi$ in the smallest sphere that contains such worlds, i.e., $f'(X,\varphi)$. This ensures that the agent does not \emph{revise} unless she rules out \textit{all} of the worlds she initially considers possible. By ``revise'', we mean specifically the process of considering new worlds possible which she did not consider possible before. This is distinct from merely ``updating'', which could be revision, or simply removing those worlds contradicting what was just learned. So, in the Grove model, the agent ``updates'' each time she learns $\varphi$, but she only ``revises'' when $X\cap|\varphi|=\emptyset$. By contrast, in the more general Lewis setting, the agent may revise any time she learns a new proposition.

Let's look at an example of the Grove model in action. Recall our story from before. Agent $a$ considers two worlds possible, $w_1$ and $w_2$. There are two worlds she does not consider possible, $w_3$ and $w_4$. $P$ is true at $w_1$, $w_3$, and $w_4$, and $Q$ is true at $w_2$ and $w_3$. Since $X={w_1,w_2}$, the initial set of possible worlds, is such that $X\cap|P|\neq\emptyset$, $f(w_1,P)=f(w_2,P)=X\cap|P|=\{w_1\}$, giving that after learning $P$ the agent believes in the singleton set of worlds $\{w_1\}$. This captures the intuition that the agent should not revise her beliefs unless what she learns \textit{contradicts} her beliefs.\footnote{This is the third positive axiom (+3) of the ``AGM'' postulates, as presented in \cite{AGM}. Of course, Grove shows that nested sphere models satisfy all of the 8 positive AGM postulates. \cite{Grove}} This presents us with two different intuitions for revision definitions. On the one hand, an agent can revise every time she learns a new piece of information, seeking to replace any world which she has ruled out. On the other hand, an agent only revises if she has ruled out every world she initially considers possible. Ultimately we will provide a notion of revision which can be modified to be compatible with either intuition.

Both of these examples nicely illustrate what we take to be the biggest issue with belief revision as a model of learning. That is, in practice, it is very difficult to know which worlds should be considered nearer to any other. Lewis, of course, makes no restrictions on nearness whatsoever. While Grove does make restrictions on nearness, they do not tell a modeler which worlds should go into which spheres in a practical sense. If one were to attempt to apply Grove's model of revision in practice, no clues seem to be given as to which worlds are nearer to any other. Some of this is by design. For example, when using revision to give a semantics for counterfactuals or other natural language constructs, one can select worlds to be near to each other precisely to get the intended speaker intuition about a sentence to fall out. \cite{Stalnaker1981} However, when one does not have such speaker intuition, or any other clear notion about which worlds should be nearer to each other, it remains difficult to practically use revision.

The ultimate goal of this paper will be to motivate a particular concept of ``nearness'' which is very easily modeled in simplicial semantics. We should not overstate our aims here, however. While we will motivate what we hope to be an intuitive notion of nearness, we do not believe this is the only, or even best, notion of nearness which one could appeal to. Nor do we believe that this notion of nearness will apply in every situation one can imagine. It will suffice for us that this notion of nearness applies in a wide range of contexts. All we are hoping to demonstrate to the reader is a particular intuition which seems reasonable enough that it is worth following up on in a formal setting.

Consider the following case: You are agent $a$. You have three friends, call them agents $b$, $c$, and $d$. Therefore the facets in this case are tetrahedrons with four perspectives total, one for each agent. There are three facts, call them $B$, $C$, and $D$. The fact $X$, should be read as ``$x$ is at the park'', hence $B$ means that ``$b$ is at the park'' and so on.

Initially, you consider a single world/facet possible, consisting of your perspective, and a perspective for each of your friends. We will label your perspective $a'$. The perspectives of each of your friends, which we will label $x_1$, consist of the fact that $X$ is true, and no other facts. So, for example, the $b$-perspective $b_1$ is such that $L(b_1)(B)=1$, and $L(b_1)(C)=L(b_1)(D)=2$. Ambiently, there are additional perspectives for your friends, call them $x_0$ for each friend $x$. These are such that $X$ is false, and no other facts are true or false. Or, more specifically, for the $b$-perspective $b_0$, $L(b_0)(B)=0$, and $L(b_0)(C)=L(b_1)(D)=2$. As we said before, initially, you only consider one world/facet possible, and this is the tetrahedron $\{a',b_1,c_1,d_1\}$. Hence, you believe that each of your friends is at the park.

Now you learn the following fact: Either $b$ and $c$ are both not at the park, $d$ is not at the park, or all three are not at the park. This is the proposition $(\neg B\wedge\neg C)\vee\neg D$. Since this contradicts your beliefs, you no longer consider any worlds possible, and hence you must revise your beliefs. How should you do so? Let's consider some possible candidates for revision. Intuitively, these worlds should preserve your perspective, since we interpret facts at your perspective as indefeasible as you learn things. These are the candidate replacement facets for the contradicted facet $X$:

\begin{itemize}
	\item $X_1:=\{a',b_1,c_1,d_0\}$
	\item $X_2:=\{a',b_0,c_0,d_1\}$
	\item $X_3:=\{a',b_0,c_0,d_0\}$
\end{itemize}

Which of these is the best replacement? We want to argue that $X_1$ is a better candidate than $X_2$, and $X_2$ is a better candidate than $X_3$, corresponding to the fact that $|X\cap X_1|>|X\cap X_2|>|X\cap X_3|$. $X_1$ and $X$ share ``more in common'' with each other than $X$ and $X_2$. Indeed, if worlds are made of perspectives, as is the conceit of simplicial semantics, then $X$ and $X_1$ sharing more perspectives than $X$ and $X_2$ becomes a simple metric by which we can say $X$ and $X_1$ are ``more similar'' or ``nearer''. In practice, this consists of wishing to preserve as many of your fellow agents' perspectives, or points of view, as possible. So, in this example, $X_1$ is nearer than $X_2$ because $X_1$ preserves both $b$'s and $c$'s perspectives, while $X_2$ preserves only $d$'s perspective. $X_3$, of course, preserves nothing.

This example is fairly extreme, however. This is a case of revision where you are revising your beliefs in light of all worlds you consider possible being ruled out. As mentioned above, we want to allow for less forced cases of revision, where you may revise even if you still consider some worlds possible. In the new setup, you initially consider two worlds possible, $X$ and $X_3$. Note that this means you are initially uncertain about all three of $B$, $C$, and $D$. You then learn that $d$ is at the park, that is, you learn that $D$. This is of course consistent with $X$, but rules out $X_3$. So, what are the candidates for replacing $X_3$? As before, they must contain $a'$, and they should make true $D$. This gives us the following candidates:

\begin{itemize}
	\item $X_2:=\{a',b_0,c_0,d_1\}$
	\item $X_4:=\{a',b_1,c_0,d_1\}$
	\item $X_5:=\{a',b_0,c_1,d_1\}$
	\item $X:=\{a',b_1,c_1,d_1\}$
\end{itemize}

Note that $|X_3\cap X_2|>|X_3\cap X_4|=|X_3\cap X_5|>|X_3\cap X|$. If we use our intuition from previously, that preserving more perspectives gives the best notion of nearness, then $X_2$ is our best candidate replacement world. This also has a few other niceties. We learn $D$, which in this setup means we learn information about $d$'s perspectives. In this setup, there is no overlap between the perspectives of your friends. Each perspective corresponds uniquely to a specified literal. So, choosing $X_2$ as the replacement facet for $X_3$ means following the intuition to minimally alter your opinion of $b$'s perspective and $c$'s perspective when learning about $d$'s perspective. In this case, you preserve your uncertainty about the facts $B$ and $C$ when you learn $D$. There is another intuition, however, which suggests that in this case, you should actually pick $X$ as the replacement for $X_3$, despite the fact that it shares the \textit{fewest} perspectives with $X_3$ out of all the possible candidates. This is because you already consider $X$ possible. In terms of uncertainty, this choice captures the fact that your initial uncertainty about $B$, $C$, and $D$ are tied together. Really, you believe that all three facts are true, or all three are false. So, if you learn that one fact is true, based on your initial setup, you infer that all three are true. This, of course, corresponds with the usual Bayesian update on new information. Under this intuition, revision would be reserved \textit{only} for the case where \textit{every} world you consider possible is ruled out. We will provide formal definitions that capture both of the above intuitions.

However, all of these examples are restricted to the case where there is a one-to-one correspondence between perspectives and literals.\footnote{As such, all of the above examples may be given in the language of local variables. \cite{KaSC}} Hence, we could equally motivate all of the above examples by arguing we are saying worlds are nearer the more literals they validate in common, rather than the more perspectives they share. We should check that our ``preserving perspectives'' notion makes sense in an example where this is not true. As before, your perspective is $a'$. Following our example from the introduction, we will imagine a case where our friends are at the park, and they are observing the status of a pond at said park. Let $D$ stand for ``there are ducks in the pond'' and $F$ stand for ``there are frogs in the pond''. We set up the perspectives as follows. For any $x\in\{b,c,d\}$, the perspective $x_{i,j}$ is such that $L(x_{i,j})(D)=i$ and $L(x_{i,j})(F)=j$. So, $b_{1,0}$ is the $b$-perspective where $b$ is seeing that there are ducks in the pond, and there are no frogs in the pond. By contrast, $c_{1,2}$ is the $c$-perspective where $c$ is seeing that there are ducks in the pond, but they are not seeing either that there are or are not frogs in the pond.

Initially, you consider two worlds possible. These are $X:=\{a',b_{1,1},c_{1,1},d_{1,1}\}$, and $Y:=\{a',b_{1,0},c_{1,2},d_{1,2}\}$. In the first world, all of your friends see ducks and frogs. In the second world, all of your friends see ducks, while $b$ sees that there are no frogs, and $c$ and $d$ neither see frogs nor see that there are no frogs. Then you learn that $F$, which contradicts $Y$. One can easily check that the nearest candidate replacement world is given by $Y':=\{a',b_{1,1},c_{1,2},d_{1,2}\}$. At this world, $D$ and $F$ are both true. But what about $Y'':=\{a',b_{1,2},c_{1,1},d_{1,2}\}$? This world satisfies the same literals as $Y'$, making true both $D$ and $F$. However, $1=|Y\cap Y'|>|Y\cap Y''|=2$. This is because $Y$ differs from $Y'$ only in $b$'s perspective, but $Y''$ differs from $Y$ in both $b$ and $c$'s perspective. This motivates the notion of nearness best. $Y$ was a world making true $D$ and $\neg F$. However, $b$'s perspective was the reason that $F$ was false at $Y$. So, when we learn that $F$ is actually true, when finding nearby candidate worlds, we should focus on changing $b$'s perspective when building our replacement for $Y$. 

This ``building'' intuition is useful to elaborate on. When imagining candidate replacement worlds, the following story is useful. Revision has, at its basis, the idea that we replace worlds with nearby ones on a world-by-world basis. That is, it doesn't (necessarily) matter that $w$ is related to $w'$ when determining nearness. So, when $Y$ is ruled out by learning $F$, we can focus on finding a replacement for $Y$ in isolation. When we look at \textit{why} $Y$ is ruled out by $F$, we find, in this case, that the culprit is $b$'s perspective. So, when searching for a replacement, we start by swapping out other $b$-perspectives in until we get something that works. This is the sense in which we are ``building'' the replacement(s) for $Y$. As it turns out, in this example, changing $b$'s perspective alone is sufficient. If it weren't, we could, intuitively, \textit{then} try swapping out $c$ perspectives and $d$ perspectives until we found something that worked. But, since swapping the $b$ perspective is a forced choice, we try that alone first. Only if that fails do we try swapping more perspectives. Hence, sharing more perspectives gives us our notion of nearness.

A final note is that the notion of a formula being true at a particular agent's perspective is not formalized directly in our language. For example, $Y''$ and $Y'$ satisfy exactly the same non-modal formulas. However, doing so would seems to be simple. One could add a modal operator $A_x$ for all $x\in Ag$ which applies only to literals, and for a model $\M$ and facet $X$, we could say that $\M,X\vDash A_xP$ iff $L(\pi_x(X))(P)=1$, and $\M,X\vDash A_x\neg P$ iff $L(\pi_x(X))(P)=0$. It would follow that $\M,Y'\vDash A_bP$, but $\M,Y''\nvDash A_bP$. 

It seems that this could be a fruitful connection between simplicial semantics and something like awareness logic.\cite{Awareness} One can read $A_x(P)$ as stating that ``$x$ is aware of $P$'', and as a consequence think of the literals true at a perspective as constituting what the agent is ``aware'' of at that perspective. Because what is and is not assigned to perspectives, and hence what agents are and are not aware of, plays such a major role in our conception of revision, this would seem to be a fruitful direction for future work. However, it is beyond the scope of this paper.

\commentout{Consider the following case: You are agent $a$. You believe your friend $b$ is in Schenley Park, and your friend $c$ is in Frick park. However, you learn from a trusted source, importantly one you trust more than our previous beliefs, that $b$ is in fact not in Schenley park. The first intuition is the following: If you only learn that $b$ is not in Schenley park, why would you stop believing that $c$ is in Frick park? When you believe that $b$ is in Schenley and $c$ is in Frick, every world you consider possible has to be a world where both of these facts are true. Hence, when you learn that $b$ is not in Schenley, you have to rule out all such worlds. Intuitively, there are two kinds of candidate replacements worlds that you could re-incorporate into belief set: those where $b$ is not in Schenley, and $c$ is in Frick, and those where $b$ is not in Schenley and $c$ is not in Frick. It seems natural that the former is a much better intuition. 
	
	But why are worlds where $c$ is still in Frick better candidates for ``nearer'' worlds to the ones you started with? There are surely many possible explanations to this question, corresponding with various intuitions, though we will focus on the following intuition: worlds are ``nearer'' when they preserve more of the perspectives of your fellow agents. To see this, suppose, as before, that you believe your friend $b$ is in Schenley Park, and your friend $c$ is in Frick park. However, additionally, you believe that your friend $d$ is in Highland park. Imagine in particular that each friend has reported to you the information about their location. Then, you learn from an even more trusted source the following fact: Either $b$ and $c$ are both inside, or $d$ is inside. Intuitively, there are three kinds of replacement worlds we could appeal to:
	
	\begin{enumerate}
		\item $d$ is not in Highland
		\item $b$ is not in Schenley and $c$ is not in Frick
		\item $b$ is not in Schenley, $c$ is not in Frick, and $d$ is not in Highland
	\end{enumerate}
	
	We believe that each collection of worlds is less appealing as a potential revision set than the last. The reason is that, in the first collection, you are only throwing out $d$'s testimony. In the second, you are throwing out both $b$ and $c$'s testimony, and in the third collection, you believe nobody's testimony. So, in an attempt to be charitable towards as many friends as possible, you revise to the first collection. This has the following unusual property. Before learning everything, you believe that $b$ is in Schenley, $c$ is in Frick, and $d$ is in Highland. Then, after learning that either $b$ and $c$ are both inside, or $d$ is inside, you believe that $d$ is not in Highland. This is in an effort to be charitable to $b$ and $c$, who in their conjunction seem to ``outrank'' $d$. Hence, you reject the information about $d$.
	
	This intuition depends on a few assumptions. First, and maybe most obviously, one must ``rank'' all of their fellow agents equally. If $d$ were a far more trustworthy source of information than $b$ or $c$, that would seem to force the second collection of worlds above to be the best collection to revise to. Relatedly, it seems that agents need to learn information in perspectives by testimony of those perspectives. For example, we learn that $b$ is in Schenley because $b$ reported it to us. And, finally, it can't be that these bits of information are dramatically different in the weight we give to them. It could be that, maybe, while we trust our friends equally, we place a lot more weight on the fact that $d$ is in Highland than on the fact that $b$ is in Schenley and $c$ is in Frick. If any of these assumptions are violated, this notion of revision seems inappropriate, or at least insufficient. However, many modeling scenarios seem to satisfy all of these, and future work will explore a few.
	
	There are two final notes we want to make about these informal examples. Firstly, all of these examples can be done in the language of local variables. \cite{KaSC}. So, if one prefers that setting to ours, modifying the below work to live in that setting seems entirely well motivated. Secondly, one might notice that in all of these examples, the information we learn rules out every world we consider possible. Perhaps, then, one might argue that this is the only setting in which belief revision is warranted. The authors would agree, in fact, that this idea is very reasonable. We will address this briefly below, though we do not formally explore this idea in this paper.}

\section{Simplicial Semantics for Belief Revision} \label{2sec:revform}

In order to formalize the intuition set out in Section \ref{2sec:revinf}, we will turn to an example. Let $Ag=\{a,b,c\}$, and suppose each agent has a privately held bit value, denoted with the proposition $P_i$ for $i\in Ag$. $P_i$ denotes that $i$'s bit value is 1, and $\neg P_i$ denotes that $i$'s bit value is 0. Because the value is privately held, $i$ is aware of the truth value of $P_i$ (it is in $i$'s perspective) but no agent other than $i$ is aware of the truth value of $P_i$. This gives us six perspectives, two for each agent. We model this formally as $N=\{a_0,a_1,b_0,b_1,c_0,c_1\}$, $V(i_j)=i$ and $L(i_j)(P_i)=j$ for all $i\in Ag$ and $j\in\{0,1\}$. These are depicted in Figure \ref{2fgr:mainrevpersp}.

\begin{figure}[htbp]
	\begin{center}
		\begin{tikzcd}
			& b_0(\neg P_b)  & a_1(P_a)   &                                     \\
			c_0(\neg P_c) &                                                        &                                                         & c_1(P_c) \\
			& a_0(\neg P_a)                                          & b_1(P_b)                                              
		\end{tikzcd}
		\caption{6 perspectives modelling 3 agents each with a privately held bit value.} \label{2fgr:mainrevpersp}
	\end{center}
\end{figure}

Now, assume that for all $i\in Ag$, $\mathcal{F}(S_i)=\{\{a_1,b_1,c_1\},\{a_1,b_1,c_0\},\{a_0,b_0,c_0\}\}$. We will call these facets $w_0=\{a_0,b_0,c_0\}$, $w_1=\{a_1,b_1,c_1\}$, and $w_2=\{a_1,b_1,c_0\}$ for brevity. The choice of worlds in this example is arbitrary except insofar as they will be illustrative when considering our revision mechanism. This is drawn in Figure \ref{2fgr:mainrevstart}.

\begin{figure}[htbp]
	\begin{center}
		\begin{tikzcd}
			& b_0(\neg P_b) \arrow[ld, no head] \arrow[dd, no head] & a_1(P_a) \arrow[rd, no head] \arrow[dd, no head]  &                                     \\
			c_0(\neg P_c) \arrow[rd, no head] \arrow[rru, no head] &                                                        &                                                         & c_1(P_c) \arrow[ld, no head] \\
			& a_0(\neg P_a)                                          & b_1(P_b) \arrow[llu, no head]                                             
		\end{tikzcd}
		\caption{The model of 3 agents with privately held bit values prior to announcement. The three triangles correspond to the three worlds.} \label{2fgr:mainrevstart}
	\end{center}
\end{figure}

Now we suppose that $P_a$ is publicly announced. How should we imagine each agent learning this piece of information? It is useful to think perspective by perspective. First, imagine you are agent $a$. Figure \ref{2fgr:mainrevstart} represents your epistemic frame prior to learning any information. If you are in perspective $a_1$, you consider two worlds possible: $w_1$ and $w_2$. Both of these worlds make true $P_a$, so you do not cease to consider either possible after announcement. However, suppose you are in perspective $a_0$. What has just been publicly announced, $P_a$, contradicts what is true at the only world you consider possible, namely $w_0$. So, you no longer consider this world possible. Moreover, revision is not possible. Candidate revision worlds should at minimum preserve your perspective. Since at $a_0$, $\neg P_a$ is in your perspective, any world containing $a_0$ will contradict $P_a$. This update is drawn in Figure \ref{2fgr:mainreva}. Note that agent $a$'s update is the usual public announcement update. Worlds which contradict the announcement are removed with no revision.

\begin{figure}[htbp]
	\begin{center}
		\adjustbox{scale=.85}{
			\begin{tikzcd}
				& b_0(\neg P_b) \arrow[red, ld, no head] \arrow[red, dd, no head] & a_1(P_a) \arrow[red, rd, no head] \arrow[red, dd, no head]  &                                     \\
				c_0(\neg P_c) \arrow[red, rd, no head] \arrow[red, rru, no head] &                                                        &                                                         & c_1(P_c) \arrow[red, ld, no head] \\
				& a_0(\neg P_a)                                          & b_1(P_b) \arrow[red, llu, no head]                                             
			\end{tikzcd}\huge$\overset{P_a}{\Rightarrow}$\normalsize \begin{tikzcd}
				& b_0(\neg P_b) & a_1(P_a) \arrow[red, rd, no head] \arrow[red, dd, no head]  &                                     \\
				c_0(\neg P_c) \arrow[red, rru, no head] &                                                        &                                                         & c_1(P_c) \arrow[red, ld, no head] \\
				& a_0(\neg P_a)                                          & b_1(P_b) \arrow[red, llu, no head]                                             
		\end{tikzcd}}
		\caption{$S_a$ before and after $P_a$ is publicly announced}\label{2fgr:mainreva}
		
	\end{center}
\end{figure}

Things are trickier for agent $b$. Again, it is useful to consider things perspective by perspective. Suppose you are in perspective $b_1$. Then you consider two worlds possible, those being $w_1$ and $w_2$. Since both of these worlds satisfy $P_a$, neither is removed or changed after the announcement. Now suppose you are in perspective $b_0$. Before the announcement, you consider only world $w_0$ possible. This is then contradicted by the announcement, and so must be removed. However, unlike agent $a$, $b$ has the possibility to revise. As mentioned before, candidates for revision should preserve $b$'s point of view. Moreover, they need to satisfy what was just announced, $P_a$. The third condition is the one motivated in Section \ref{2sec:revinf}. Namely, replacement worlds should preserve as many perspectives from the replaced world as possible. In this case, the world to be replaced is $w_0=\{a_0,b_0,c_0\}$. By the first condition, candidate replacement worlds need to contain $b_0$. By the second condition, they need to satisfy $P_a$. In our example, that means candidate replacement worlds for $w_0$ after $P_a$ is announced must contain $a_1$. This gives us two potential candidates: $w_3=\{a_1,b_0,c_0\}$ and $w_4=\{a_1,b_0,c_1\}$. Note that $|w_3\cap w_0|>|w_4\cap w_0|$. Hence, $w_3$ and $w_0$ share more perspectives, namely two, than $w_4$ and $w_0$, which only share one perspective. So, from our informal intuition, $w_3$ is the best candidate replacement for $w_0$ after $P_a$ is announced. This is shown in Figure \ref{2fgr:mainrevb}.

\begin{figure}[htbp]
	\begin{center}
		\adjustbox{scale=.85}{
			\begin{tikzcd}
				& b_0(\neg P_b) \arrow[blue, ld, no head] \arrow[blue, dd, no head] & a_1(P_a) \arrow[blue, rd, no head] \arrow[blue, dd, no head]  &                                     \\
				c_0(\neg P_c) \arrow[blue, rd, no head] \arrow[blue, rru, no head] &                                                        &                                                         & c_1(P_c) \arrow[blue, ld, no head] \\
				& a_0(\neg P_a)                                          & b_1(P_b) \arrow[blue, llu, no head]                                             
			\end{tikzcd}\huge$\overset{P_a}{\Rightarrow}$\normalsize \begin{tikzcd}
				& b_0(\neg P_b) \arrow[blue, r, no head] & a_1(P_a) \arrow[blue, rd, no head] \arrow[blue, dd, no head]  &                                     \\
				c_0(\neg P_c) \arrow[blue, rru, no head] \arrow[blue, ru, no head] &                                                        &                                                         & c_1(P_c) \arrow[blue, ld, no head] \\
				& a_0(\neg P_a)                                          & b_1(P_b) \arrow[blue, llu, no head]                                             
		\end{tikzcd}}
		\caption{$S_b$ before and after $P_a$ is publicly announced}\label{2fgr:mainrevb}
		
	\end{center}
\end{figure}

The last agent to explore is agent $c$. Suppose you are in perspective $c_1$. Before the announcement, you consider one world possible, namely $w_1$. This is not contradicted by the announcement, so there is no change. Now suppose you are in perspective $c_0$. You consider two worlds possible, namely $w_0$ and $w_2$. Since $w_0$ satisfied $\neg P_a$, this is ruled out by the announcement of $P_a$. Moreover, the same revision calculation carried out for agent $b$ reveals that the best candidate for $w_0$ is $w_3$. This is drawn in Figure \ref{2fgr:mainrevc}.

\begin{figure}[htbp]
	\begin{center}
		\adjustbox{scale=.85}{
			\begin{tikzcd}
				& b_0(\neg P_b) \arrow[green, ld, no head] \arrow[green, dd, no head] & a_1(P_a) \arrow[green, rd, no head] \arrow[green, dd, no head]  &                                     \\
				c_0(\neg P_c) \arrow[green, rd, no head] \arrow[green, rru, no head] &                                                        &                                                         & c_1(P_c) \arrow[green, ld, no head] \\
				& a_0(\neg P_a)                                          & b_1(P_b) \arrow[green, llu, no head]                                             
			\end{tikzcd}\huge$\overset{P_a}{\Rightarrow}$\normalsize \begin{tikzcd}
				& b_0(\neg P_b) \arrow[green, r, no head] & a_1(P_a) \arrow[green, rd, no head] \arrow[green, dd, no head]  &                                     \\
				c_0(\neg P_c) \arrow[green, rru, no head] \arrow[green, ru, no head] &                                                        &                                                         & c_1(P_c) \arrow[green, ld, no head] \\
				& a_0(\neg P_a)                                          & b_1(P_b) \arrow[green, llu, no head]                                             
		\end{tikzcd}}
		\caption{$S_c$ before and after $P_a$ is publicly announced, assuming revision happens to every eliminated world.}\label{2fgr:mainrevc}
		
	\end{center}
\end{figure}

However, this is not the only intuition one could have about how $c$ should revise. When $b$ revised around perspective $b_0$, it is notable that all of the worlds containing $b_0$ had been eliminated by the announcement. By contrast, both there are worlds consistent with the announcement containing both $c_1$, and in particular, $c_0$, namely $w_2$. Because $c$ already considers $w_2$ possible, it is reasonable to argue that $w_2$ is ``nearer'' to $w_0$, even though $|w_3\cap w_0|>|w_2\cap w_0|$. This would track the intuition that revision should only happen if all worlds the agent considers possible are eliminated, as per Grove's nested sphere model, discussed in Section \ref{2sec:revinf}. \cite{Grove} In this case, the revision looks as it appears in Figure \ref{2fgr:mainrevcG}. 

\begin{figure}[htbp]
	\begin{center}
		\adjustbox{scale=.85}{
			\begin{tikzcd}
				& b_0(\neg P_b) \arrow[green, ld, no head] \arrow[green, dd, no head] & a_1(P_a) \arrow[green, rd, no head] \arrow[green, dd, no head]  &                                     \\
				c_0(\neg P_c) \arrow[green, rd, no head] \arrow[green, rru, no head] &                                                        &                                                         & c_1(P_c) \arrow[green, ld, no head] \\
				& a_0(\neg P_a)                                          & b_1(P_b) \arrow[green, llu, no head]                                             
			\end{tikzcd}\huge$\overset{P_a}{\Rightarrow}$\normalsize \begin{tikzcd}
				& b_0(\neg P_b) & a_1(P_a) \arrow[green, rd, no head] \arrow[green, dd, no head]  &                                     \\
				c_0(\neg P_c) \arrow[green, rru, no head] &                                                        &                                                         & c_1(P_c) \arrow[green, ld, no head] \\
				& a_0(\neg P_a)                                          & b_1(P_b) \arrow[green, llu, no head]                                             
		\end{tikzcd}}
		\caption{$S_c$ before and after $P_a$ is publicly announced, assuming revision happens only when a perspective is no longer contained in any world.}\label{2fgr:mainrevcG}
		
	\end{center}
\end{figure}

Fix a formula $\varphi$ which contains no modalities in its recursive construction. That is, $\varphi$ is a proposition. Following the intuitions above, given a model $\M$, if a facet $X\in\F(S_a)$ is ruled out by a public announcement of $\varphi$, candidate replacement facets $Y$ needs to satisfy the following three conditions:

\begin{enumerate}
	\item $X$ and $Y$ need to share the same $a$-perspective, or more formally, $\pi_a(X)=\pi_a(Y)$
	\item $Y$ need to satisfy $\varphi$, or more formally, $\M,Y\vDash\varphi$
	\item If $Z$ satisfies conditions $1$ and $2$, $Y$ needs to share at least as many perspectives with $X$ as $Z$, or more formally, $|Y\cap X|\geq|Y\cap Z|$.
\end{enumerate}

We can formalize these three conditions into the following revision rule:

\begin{definition}{$\mathcal{M}[\varphi]$}\label{2DefRev1}
	
	Consider a UCF simplicial model $\mathcal{M}=\langle N,V,L,\{S_a\}_{a\in A}\rangle$, and $\varphi$ a formula with no modalities in its recursive construction. Let  $N_\varphi:=N$, $V_\varphi:=V$, and $L_\varphi:=L$. Further, let $\mathbf{F}_\varphi$ be the subset of $\mathcal{F}(\mathfrak{M}((N_\varphi,V_\varphi,L_\varphi))$ which satisfy $\varphi$. We need to define a function $\mathcal{R}_a:\mathcal{F}(S_a)\times\L_{B}(Ag)\rightarrow2^{\mathbf{F}_\varphi}$.

	$$\mathcal{R}_a(F,\varphi):=\{G\in\mathbf{F}_\varphi~|~(\pi_a(F)=\pi_a(G))\wedge\forall X\in\mathbf{F}_\varphi((\pi_a(X)=\pi_a(G))\rightarrow(|F\cap G|\geq|X\cap G|))\}$$
	
	Note that $\mathcal{R}_a:\F(S_a)\times\L_{B}(Ag)\rightarrow 2^{\mathbf{F}_\varphi}$ is a nearness function.\footnote{The total set of worlds $S$ is $\mathfrak{M}((N_\varphi,V_\varphi,L_\varphi)=\mathfrak{M}(N,V,L)$, as $\F(S_a)\subseteq \mathfrak{M}(N,V,L)$ and $\mathbf{F}_\varphi\subseteq\mathfrak{M}(N_\varphi,V_\varphi,L_\varphi)$.} Further, note that, if $G\in\mathbf{F}_\varphi$, then $\mathcal{R}_a(G,\varphi)=\{G\}$. Let each $S_{a,\varphi}$ be the simplicial complex whose facets are elements of the set $\bigcup_{F\in\mathcal{F}(S_a)}\mathcal{R}_a(F,\varphi)$. Then $\mathcal{M}[\varphi]:=\langle N_\varphi,V_\varphi,L_\varphi,\{S_{a,\varphi}\}_{a\in A}\rangle$.
	
\end{definition}\footnote{This definition of revision, and the one following, also seems perfectly well motivated in settings where one uses the language of Local Variables. \cite{KaSC} So, if one prefers that setting, this definition should be conceptually compatible.}

If $S$ is not maximal, simply take each $S_{a,\varphi}$ be the simplicial complex whose facets are elements of the set $\F(S)\cap\bigcup_{F\in\mathcal{F}(S_a)}\mathcal{R}_a(F,\varphi)$. It is easy to check that the above definition gives a UCF sub-simplicial complex of the maximal complex of $N_\varphi$, $V_\varphi$, and $L_\varphi$ (or a UCF subcomplex of $S$ if $S$ is non-maximal). Therefore, the above definition gives us a model $\M[\varphi]$ which represents the result of a model $\M$ after each agent revises their beliefs in light of a public announcement of $\varphi$. 

A potential fourth condition is that, if there is already a facet $Y\in\F(S_a)$ such that $\pi_a(X)=\pi_a(Y)$ and $\M,Y\vDash\varphi$, then $Y$ is the nearest facet to $X$, regardless of intersection size. This potential fourth condition is motivated by the intuition that revision should only happen if a perspective becomes isolated, i.e., an $a$-perspective is no longer contained in a facet of $S_a$. We formalize this as follows:

\begin{definition}{$\mathcal{M}[\varphi]_G$\footnote{The subscript stands for ``Grove''.}}\label{2DefRev2}
	
	Consider a UCF simplicial model $\mathcal{M}=\langle N,V,L,\{S_a\}_{a\in A}\rangle$, and $\varphi$ a formula with no modalities in its recursive construction. Let $N_\varphi:=N$, $V_\varphi:=V$, and $L_\varphi:=L$. We need to define a function $\mathcal{R}_{a,G}:\mathcal{F}(S_a)\times\L_{B}(Ag)\rightarrow2^{\mathbf{F}_\varphi}$. Fix the set $\mathbf{T}_a(F,\varphi):=\{X\in\mathbf{F}_\varphi\cap\F(S_a)~|~\pi_a(X)=\pi_a(Y)\}$. Then
	
	$$\mathcal{R}_{a,G}(F,\varphi):=\begin{cases}
		\mathbf{T}_a(F,\varphi) & \text{if }\mathbf{T}_a(F,\varphi)\neq\emptyset\\
		\mathcal{R}_{a}(F,\varphi)	& \text{if }\mathbf{T}_a(F,\varphi)=\emptyset
	\end{cases}$$
	
	Note that $\mathcal{R}_{a,G}:\F(S_a)\times\L_{B}(Ag)\rightarrow 2^{\mathbf{F}_\varphi}$ is a nearness function. Further, note that, if $X\in\mathbf{F}_\varphi$, then $X\in\mathcal{R}_{a,G}(X,\varphi)$. Let each $S_{a,\varphi,G}$ be the simplicial complex whose facets are elements of the set $\bigcup_{F\in\mathcal{F}(S_a)}\mathcal{R}_a(F,\varphi)$. Then $\mathcal{M}[\varphi]_G:=\langle N_\varphi,V_\varphi,L_\varphi,\{S_{a,\varphi,G}\}_{a\in A}\rangle$
	
\end{definition}

As before, it is easy to check that the above definition gives a UCF sub-simplicial complex of the maximal complex of $N_\varphi$, $V_\varphi$, and $L_\varphi$. Therefore, the above definition gives us a model $\M[\varphi]_G$ which represents the result of a model $\M$ after each agent revises their beliefs in light of a public announcement of $\varphi$, where revision is restricted to the case where perspectives become isolated, represented by the case where $\mathbf{T}_a(F,\varphi)=\emptyset$. While we present both concepts here, for the remainder of the text, we will use $\M[\varphi]$ unless otherwise stated.

\subsection{Examples} \label{2subsec:ex}

The purpose of these examples is to become familiar with the updating procedure, and to motivate a particular issue that arises under certain interpretations of announcement. Figure \ref{2fgr:mainex1} captures the idea of ``everything going right.'' This is the kind of story we want to be able to capture.

\begin{figure}[htbp]
	\begin{center}\adjustbox{scale=.75}{
			\begin{tikzcd}
				& b_1(P_b) \arrow[blue, dd, no head] \arrow[blue, rrd, no head]                                        & a_1(P_a) \arrow[red, dd, no head] \arrow[red, rd, no head]                                               &                  &                     &                      &                                                             & b_1(P_b) \arrow[blue, rrd, no head] \arrow[blue, r, no head]                                                                                                                   & a_1(P_a) \arrow[red, rd, no head] \arrow[blue, rd, no head, bend left] \arrow[red, dd, no head] \arrow[green, dd, no head, bend right] &                  \\
				c_1(P_c) \arrow[green, rd, no head] \arrow[green, rrd, no head]                                           &                                                                                                  &                                                                                                       & c_0(\neg P_c) & {} \arrow[r, "P_a", Rightarrow] & {}                   & c_1(P_c) \arrow[green, rrd, no head] \arrow[green, rru, no head] &                                                                                                                                                                            &                                                                                                    & c_0(\neg P_c) \\
				& a_0(\neg P_a) \arrow[blue, rru, no head] \arrow[green, r, no head]                                           & b_0(\neg P_b) \arrow[red, ru, no head]                                                                   &                  &                     &                      &                                                             & a_0(\neg P_a)                                                                                                                                                             & b_0(\neg P_b) \arrow[red, ru, no head]                                                                &                  \\
				&                                                                                                  &                                                                                                       &                  &                     &                      &                                                             & {} \arrow[d, "P_b", Rightarrow]                                                                                                                                                        &                                                                                                    &                  \\
				&                                                                                                  &                                                                                                       &                  &                     &                      &                                                             & {}                                                                                                                                                                         &                                                                                                    &                  \\
				& b_1(P_b) \arrow[blue, r, no head] \arrow[red, r, no head, bend left] \arrow[green, r, no head, bend right] & a_1(P_a) \arrow[green, lld, no head] \arrow[blue, lld, no head, bend left] \arrow[red, lld, no head, bend right] &                  &                     &                      &                                                             & b_1(P_b) \arrow[green, r, no head, bend right] \arrow[green, ld, no head] \arrow[blue, rrd, no head] \arrow[blue, r, no head] \arrow[red, r, no head, bend left] \arrow[red, rrd, no head, bend right] & a_1(P_a) \arrow[blue, rd, no head, bend left] \arrow[red, rd, no head]                                 &                  \\
				c_1(P_c) \arrow[green, ru, no head] \arrow[blue, ru, no head, bend left] \arrow[red, ru, no head, bend right] &                                                                                                  &                                                                                                       & c_0(\neg P_c) & {}                  & {} \arrow[l, "P_c"', Rightarrow] & c_1(P_c) \arrow[green, rru, no head]                      &                                                                                                                                                                            &                                                                                                    & c_0(\neg P_c) \\
				& a_0(\neg P_a)                                                                                   & b_0(\neg P_b)                                                                                       &                  &                     &                      &                                                             & a_0(\neg P_a)                                                                                                                                                             & b_0(\neg P_b)                                                                                    &                 
		\end{tikzcd}}
		\caption{Four models in a sequence of three announcements. The initial configuration depicts three agents, each of whose private bit value is 1, but has the mistaken belief that the other two agents have private bit value of zero. The announcements sequentially correct this misconception.}\label{2fgr:mainex1}
		
	\end{center}
\end{figure}

Suppose that the ``real'' world/facet is given by $\{a_1,b_1,c_1\}$. Then, at this facet, it is true that each agent has a true bit value of 1, but mistakenly believes that the other agents have a bit value of 0. Each public announcement of the true bit values brings the agents closer to consensus. There is no interesting higher order modal information here, as the different colored simplicial complexes do not interact with each other. Note that at each stage, there is exactly one red, one blue, and one green facet. Therefore, there can be no two worlds, or in our case facets, which are the same color and share a perspective of that color. Hence, there is no uncertainty about worlds/facets, and thus no higher order information.

Figure \ref{2fgr:mainex2} also captures ``things going well,'' but importantly motivates details about how we should think about ``candidate alternative'' for perspectives.

\begin{figure}[htbp]
	\begin{center}
		\adjustbox{scale=.95}{\begin{tikzcd}
				& b_1\color{black}(\neg P)         &                             &                              &    &                         & b_1\color{black}(\neg P)                            &                             \\
				{b_0\color{black}( P,Q)} & a \arrow[red, u, no head] & {b_2\color{black}(P,\neg Q)} & {} \arrow[r, "P", Rightarrow]            & {} & {b_0\color{black}( P,Q)} & a \arrow[red, r, no head] \arrow[red, l, no head] & {b_2\color{black}(P,\neg Q)} \\
				& {} \arrow[d, "P\wedge\neg Q"', Rightarrow]  &                             & {} \arrow[rrdd, "P\wedge Q", Rightarrow] &    &                         &                                                    &                             \\
				& {}                              &                             &                              &    &                         &                                                    &                             \\
				& b_1\color{black}(\neg P)         &                             &                              &    & {}                      & b_1\color{black}(\neg P)                            &                             \\
				{b_0\color{black}(P,Q)} & a \arrow[red, r, no head] & {b_2\color{black}(P,\neg Q)} &                              &    & {b_0\color{black}( P,Q)} & a \arrow[red, l, no head]                    & {b_2\color{black}(P,\neg Q)}
		\end{tikzcd}}
		\caption{Four models and three distinct updates based on three distinct announcements.}\label{2fgr:mainex2}
		
	\end{center}
\end{figure}

The complexes shown in Figure \ref{2fgr:mainex2} are all $S_a$. There are two key takeaways from this example. Firstly, what is announced significantly affects the update. When $P\wedge Q$ is announced, $a$ switches to the available perspective where both of these are true. When $P\wedge\neg Q$ is announced, a different perspective. Moreover, when only $P$ is announced, intuitively, $a$ should become uncertain, as there is no way for them to differentiate between $b_0$ and $b_2$ on the available information.

Figure \ref{2fgr:mainex3} motivates a flaw in this version of the formalism. Agents are, at present, very forgetful of what they've learned. If we take announcements to have a high degree of veracity, this seems like an issue that is worth overcoming.

\begin{figure}[htbp]
	\begin{center}
		\begin{tikzcd}
			& b_1\color{black}(P) &                               &    &                                                    & b_1\color{black}(P)                    \\
			b_0\color{black}(Q) & a \arrow[red, u, no head] & {} \arrow[r, "Q", Rightarrow] & {} & b_0\color{black}(Q) \arrow[red, r, no head] & a                                       \\
			&                                 &                               &    &                                                    & {} \arrow[d, "P", Rightarrow]                      \\
			&                                 &                               &    &                                                    & {}                                                 \\
			&                                 &                               &    &                                                    & b_1\color{black}(P) \arrow[red, d, no head] \\
			&                                 &                               &    & b_0\color{black}(Q)                    & a                                      
		\end{tikzcd}
		\caption{Three models with two agents, showing an initial model with sequential updates after a public announcement of $Q$ then of $P$. Agent $a$ begins by believing $P$.}\label{2fgr:mainex3}
		
	\end{center}
\end{figure}

The complexes shown in Figure \ref{2fgr:mainex3} are all $a$'s complexes. In the example, $a$ in the first complex believes $P\wedge\neg Q$. Then $Q$ is announced, and $a$ believes $\neg P\wedge Q$. Then when $P$ is announced, the agent believes $P\wedge\neg Q$ again. In this sense, the agent is extremely forgetful of what has just been announced. The agent goes right back to believing $\neg Q$, even though they had just heard that $Q$ was announced.

One might think that the example in Figure \ref{2fgr:mainex3} is a bit forced. Yes, the agents are forgetful; the example is extremely constrained. Given the facets available to agent $a$, the two announcements are in fact inconsistent with each other. As we see in Figure \ref{2fgr:mainex4}, however, adding in more potential facets does not solve the issue.

\begin{figure}[htbp]
	\begin{center}\adjustbox{scale=.85}{\begin{tikzcd}
				& b_1\color{black}(P,\neg Q)      &                                              &                               &    &                                         & b_1\color{black}(P,\neg Q)        &                                              \\
				b_0\color{black}(P,Q) & a \arrow[red, r, no head]                  & b_2\color{black}(\neg P,Q) & {} \arrow[r, "P", Rightarrow] & {} & b_0\color{black}(P,Q) & a \arrow[red, u, no head] \arrow[red, l, no head] & b_2\color{black}(\neg P,Q) \\
				& b_3\color{black}(\neg P,\neg Q) &                                              &                               &    &                                         & b_3\color{black}(\neg P,\neg Q)   &                                              \\
				&                                                  &                                              &                               &    &                                         & {} \arrow[d, "Q", Rightarrow]                      &                                              \\
				&                                                  &                                              &                               &    &                                         & {}                                                 &                                              \\
				&                                                  &                                              &                               &    &                                         & b_1\color{black}(P,\neg Q)        &                                              \\
				&                                                  &                                              &                               &    & b_0\color{black}(P,Q) & a \arrow[red, l, no head] \arrow[red, r, no head] & b_2\color{black}(\neg P,Q) \\
				&                                                  &                                              &                               &    &                                         & b_3\color{black}(\neg P,\neg Q)   &                                             
		\end{tikzcd}}
		\caption{Three models with two agents, showing an initial model with four facets in $W=\M(N,V,L)$, corresponding to all possible arrangements of the atoms $P$ and $Q$. As in Figure \ref{2fgr:mainex3}, there are two sequential updates. First the agent updates by $P$ and then $Q$. Agent $A$ begins by believing $\neg P\wedge Q$.}\label{2fgr:mainex4}
		
	\end{center}
\end{figure}

Figure \ref{2fgr:mainex4} is similar to Figure \ref{2fgr:mainex3}, except that there is a potential consistent facet for every combination of truth values of $P$ and $Q$. Note that, before any announcements, $a$ considers $\neg P\wedge Q$ possible. However, after the announcements of $P$ and $Q$, they should know $P\wedge Q$. Yet they consider $\neg P\wedge Q$ possible again! The situation is not ideal, to say the least.

\subsection{Memory} \label{2subsec:mem}

We will show a simple solution for this ``forgetfulness'' by using the $S$ complex as a kind of ``memory''. $S$ is the set from which all facets are selected already. In sequential updates, if subsequent $S$ keep a record that of everything that has been announced, then the candidate facets for revision, too, will keep such a record. We can define this formally (the new parts of the definition are highlighted in red):

\begin{definition}{$\mathcal{M}[\varphi]_m$}\label{2DefRevm}
	
	Consider a UCF simplicial model $\mathcal{M}=\langle N,V,L,\color{red}S,\color{black}\{S_a\}_{a\in A}\rangle$, and $\varphi$ a formula with no modalities in its recursive construction.. Let  $N_\varphi:=N$, $V_\varphi:=V$, $L_\varphi:=L$. Further, let $\mathbf{F}_\varphi$ be the subset of $\mathcal{F}(\mathfrak{M}((N_\varphi,V_\varphi,L_\varphi))$ which satisfy $\varphi$. \color{red} Also let $S_\varphi$ be the simplicial complex whose facets are given by $\F(S)\cap\mathbf{F}_\varphi$. \color{black} We need to define a function $\mathcal{R}_a:\mathcal{F}(S_a)\times\L_{B}(Ag)\rightarrow2^{\mathbf{F}_\varphi}$.

	$$\mathcal{R}_a(F,\varphi):=\{G\in\color{red}\F(S_\varphi)\color{black}~|~(\pi_a(F)=\pi_a(G))\wedge\forall X\in\mathbf{F}_\varphi((\pi_a(X)=\pi_a(G))\rightarrow(|F\cap G|\geq|X\cap G|))\}$$
	
	Note that $\mathcal{R}_a:\F(S_a)\times\L_{B}(Ag)\rightarrow 2^{\mathbf{F}_\varphi}$ is a nearness function. Further, note that, if $G\in\color{red}\F(S_\varphi)\color{black}$, then $\mathcal{R}_a(G,\varphi)=\{G\}$. Let each $S_{a,\varphi}$ be the simplicial complex whose facets are elements of the set $\bigcup_{F\in\mathcal{F}(S_a)}\mathcal{R}_a(F,\varphi)$. Then $\mathcal{M}[\varphi]_m:=\langle N_\varphi,V_\varphi,L_\varphi,\color{red} S_\varphi,\color{black}\{S_{a,\varphi}\}_{a\in A}\rangle$.
	
\end{definition}

Let us observe how this new definition solves our previous example. Consider Figure \ref{2fgr:mainex5}.

\begin{figure}[htbp]
	\begin{center}\adjustbox{scale=.8}{\begin{tikzcd}
				& b_1\color{black}(P,\neg Q)      &                                              &                               &    &                                         & b_1\color{black}(P,\neg Q)        &                                              \\
				b_0\color{black}(P,Q) & a \arrow[red, r, no head]                  & b_2\color{black}(\neg P,Q) & {} \arrow[r, "P", Rightarrow] & {} & b_0\color{black}(P,Q) & a \arrow[red, u, no head] \arrow[red, l, no head] & b_2\color{black}(\neg P,Q) \\
				& b_3\color{black}(\neg P,\neg Q) &                                              &                               &    &                                         & b_3\color{black}(\neg P,\neg Q)   &                                              \\
				&                                                  &                                              &                               &    &                                         & {} \arrow[d, "Q", Rightarrow]                      &                                              \\
				&                                                  &                                              &                               &    &                                         & {}                                                 &                                              \\
				&                                                  &                                              &                               &    &                                         & b_1\color{black}(P,\neg Q)        &                                              \\
				&                                                  &                                              &                               &    & b_0\color{black}(P,Q) & a \arrow[red, l, no head] & b_2\color{black}(\neg P,Q) \\
				&                                                  &                                              &                               &    &                                         & b_3\color{black}(\neg P,\neg Q)   &                                             
		\end{tikzcd}}
		\caption{$S_c$ before and after $P_a$ is publicly announced, assuming revision happens only when a perspective is no longer contained in any world.}\label{2fgr:mainex5}
		
	\end{center}
\end{figure}

Let us track how $S$ alters with each update in Figure \ref{2fgr:mainex5}. In the initial configuration, $\F(S)=\M(N,V,L)=\{\{a,b_0\},\{a,b_1\},\{a,b_2\},\{a,b_3\}\}$. After $P$ is announced, $\F(S_P)=\F(S)\cap\mathbf{F}_P=\{\{a,b_0\},\{a,b_1\}\}$. This does not affect the first update, except insofar as the worlds $\{a,b_2\}$ and $\{a,b_3\}$ are no longer available for subsequent revisions. This fact, however, $\F(S_{P,Q})=\F(S_P)\cap\mathbf{F}_Q=\{\{a,b_0\}\}$. This means the only available world to revise to is $\{a,b_0\}$, and indeed, this is what we see in Figure \ref{2fgr:mainex5}. The issue where the agent has gone back to considering the world $\{a,b_2\}$ possible, despite it being ruled out in the first update, has been taken care of.

Another potential solution to the issue of memory is to instead ``write'' what has been announced directly into perspectives. In the conception of memory given above, announcements are taken to be indefeasible. But this new conception would treat announcements as ``hard'' information available to the agents. Working this idea out formally does require altering definitions somewhat. In particular, we will need to allow for ``inconsistent'' perspectives, that is, perspectives that may contain both $P$ and $\neg P$ for some atom $P$. To account for this, Let $L:N\rightarrow 4^\mathfrak{P}$, where $L(n)(P)=4$ is interpreted as $P$ and $\neg P$ both being assigned to $n$. The maximal complex is restricted so that no faces containing a node for which some atom is assigned to $4$ are included, in order to preserve the consistency of every face.

This notion of update is in greater accordance with the kinds of updates explored in the literature connecting simplicial complexes and distributed computing. \cite{SimpDEL,DCTCT} There, announcements change, and generally increase, what is true at a perspective.\footnote{In the case where announcements are not public, the number of perspectives increases at each time step. Forthcoming work will seek to model this via action models.} For our purposes, for this notion of announcement to make sense, we must assume that only atoms are announced.\footnote{We leave the possibility of generalizing this idea for future work.} We define this as follows:

\begin{definition}{$\mathcal{M}[P]_h$}\label{2DefRevh}
	
	Consider a UCF simplicial model $\mathcal{M}=\langle N,V,L,\{S_a\}_{a\in A}\rangle$, and $\varphi$ a formula with no modalities in its recursive construction. Let  $N_P:=N$, $V_P:=V$, and 
	
	$$L_P(n)(P):=\begin{cases}
		1	& \text{if}\;L(n)(P)\in{1,3} \\
		4	& \text{if}\;L(n)(P)\in\{2,4\}
	\end{cases}$$
	
	Further, let $\mathbf{F}_P$ be the subset of $\mathcal{F}(\mathfrak{M}((N_P,V_P,L_P))$ which satisfy $P$. We need to define a function $\mathcal{R}_a:\mathcal{F}(S_a)\times\L_{B}(Ag)\rightarrow2^{\mathbf{F}_P}$.

	$$\mathcal{R}_a(F,P):=\{G\in\mathbf{F}_P~|~(\pi_a(F)=\pi_a(G))\wedge\forall X\in\mathbf{F}_P((\pi_a(X)=\pi_a(G))\rightarrow(|F\cap G|\geq|X\cap G|))\}$$
	
	Note that $\mathcal{R}_a:\F(S_a)\times\L_{B}(Ag)\rightarrow 2^{\mathbf{F}_P}$ is a nearness function. Further, note that, if $G\in\mathbf{F}_P$, then $\mathcal{R}_a(G,P)=\{G\}$. Let each $S_{a,P}$ be the simplicial complex whose facets are elements of the set $\bigcup_{F\in\mathcal{F}(S_a)}\mathcal{R}_a(F,P)$. Then $\mathcal{M}[P]:=\langle N_P,V_P,L_P,\{S_{a,P}\}_{a\in A}\rangle$.
	
\end{definition}

Note that Figure \ref{2fgr:mainex6} is the same as Figure \ref{2fgr:mainex5}. These two definitions do not always give the same output, however. Consider Figure \ref{2fgr:mainex7}. On the left is the update given by our procedure, and on the right is the update given by introducing the announced formula to every perspective. This suggests a subtle difference between the nature of hard information at a perspective, and the sense in which announcements are hard information, or indefeasible. As we see in Figure \ref{2fgr:mainex7}, an indefeasible announcement of $P$ rules out a facet where $P$ is false simply because no agent has hard information of $P$. By contrast, if we interpret announcements instead as giving agents hard information in their perspectives, the facet remains.\footnote{Above, we acknowledged that one can interpret a knowledge modality on $S$. It seems that, since when we use $S$ as memory, we force it to update with the usual, factive account, this is entirely conceptually consistent with interpreting a knowledge modality on $S$. Again, however, we do not explore that here, in order to keep the focus on revision.}

\begin{figure}[htbp]
	\begin{center}\adjustbox{scale=.8}{\begin{tikzcd}
				& b_1\color{black}(P,\neg Q)      &                                              &                               &    &                                         & b_1\color{black}(P,\neg Q)        &                                              \\
				b_0\color{black}(P,Q) & a \arrow[red, r, no head]                  & b_2\color{black}(\neg P,Q) & {} \arrow[r, "P", Rightarrow] & {} & b_0\color{black}(P,Q) & a \arrow[red, u, no head] \arrow[red, l, no head] & b_2\color{black}(\neg P,Q,P) \\
				& b_3\color{black}(\neg P,\neg Q) &                                              &                               &    &                                         & b_3\color{black}(\neg P,\neg Q,P)   &                                              \\
				&                                                  &                                              &                               &    &                                         & {} \arrow[d, "Q", Rightarrow]                      &                                              \\
				&                                                  &                                              &                               &    &                                         & {}                                                 &                                              \\
				&                                                  &                                              &                               &    &                                         & b_1\color{black}(P,\neg Q,Q)        &                                              \\
				&                                                  &                                              &                               &    & b_0\color{black}(P,Q) & a \arrow[red, l, no head] & b_2\color{black}(\neg P,Q,P) \\
				&                                                  &                                              &                               &    &                                         & b_3\color{black}(\neg P,\neg Q,P,Q)   &                                             
		\end{tikzcd}}
		\caption{Figure \ref{2fgr:mainex5} using Definition \ref{2DefRevh} instead of \ref{2DefRevm}.}\label{2fgr:mainex6}
		
	\end{center}
\end{figure}

\begin{figure}[htbp]
	\begin{center}
		\begin{tikzcd}
			\color{red}a \arrow[d, no head] &  & \color{red}a \arrow[d, no head]                 \\
			\color{blue}b                   &  & \color{blue}b                                   \\
			{} \arrow[d, "P"', Rightarrow]  &  & {} \arrow[d, "P", Rightarrow]                   \\
			{}                              &  & {}                                              \\
			\color{red}a                    &  & \color{red}a\color{black}(P) \arrow[d, no head] \\
			\color{blue}b                   &  & \color{blue}b\color{black}(P)                  
		\end{tikzcd}
		\caption{Two sequences of two models. The first, on the left, shows an initial configuration with two perspectives composing one world. After $P$ is announced, using Definition \ref{2DefRevm}, that world is erased. On the right, the same initial configuration is given and the same announcement is made, but using Definition \ref{2DefRevh}, the world remains.}\label{2fgr:mainex7}
		
	\end{center}
\end{figure}

It is worth emphasizing that this notion of ``memory'' is only well motivated if announcements are taken to be, in some sense, indefeasible. If announcements can be contradicted, Figure \ref{2fgr:mainex4} might be a coherent picture. This is because this notion of ``memory'' would ensure that if an agent learns something that entails $\varphi$, and then later learns something which entails $\neg\varphi$, there are no worlds for the agent to revise to. Future work will want to consider precisely such kinds of contradictory announcements, and so will not make use of this notion of memory.

\section{Conclusion} \label{2sec:conc}

This paper sought to give a model of belief revision in simplicial semantics. In particular, the hope was that by appealing to the fact that in simplicial semantics, worlds are composed of a different structure, namely perspectives, we could say that two worlds were ``nearer'' if they shared more perspectives. Two different formalisms were given on the basis of this intuition, the first of which allowed revision on every world ruled out by an announcement, and the second of which restricted revision to situations where perspectives become ``isolated'', that is, an $a$-colored perspective is no longer contained in a facet of $S_a$. After exploring some examples, we saw that, if we took announcements to be indefeasible, repeated revisions led to some strange behavior, where agents could go back-and-forth on whether or not they considered a particular world possible, simply believing whatever was announced to them most recently. To solve this issue, we introduced the notion of ``memory'', and motivated two different ways we could force agents to ``remember'' what had been announced, avoiding the previous issue.

There are quite a few directions for future work. A simple project would involve connecting the use of $S$ as ``memory'' and the possibility of interpreting a knowledge modality on $S$, as in \cite{BelSimp}. Whether or not these two formal applications of $S$ are conceptually consistent with each other will take further analysis. Another possibility for future work is exploring how different notions of trust affect this model of revision. Indeed, if some agents treat others as ``more trustworthy'' than others, presumably their perspectives should ``outrank'' other agents' perspectives in determining nearness. Spelling this out formally, and seeing all the possible ways one might do this, would be a compelling direction to take this notion of revision.

It would be nice to explore more connections between this semantics and other presentations of belief revision. In particular, future work should explore connections between our semantics and the ``AGM'' presentation of revision. \cite{AGM,Grove} For example, we should check if and how we can use our semantics to model multi-agent belief revision. \cite{MAGM} Another possibility would be to use the same perspective-based structure of worlds in simplicial semantics to model belief merging instead of belief revision. We can therefore phrase the question of merging thusly: Given a model, and two perspectives in the model, what is the ``best'' set of worlds one could create using the two sets of worlds containing each of the two perspectives, respectively. A simple answer is obviously the intersection. Given two perspectives, call them $x_1$ and $x_2$, let $S_i$ be the set of facets in a model $\mathcal{M}$ containing $x_i$. Then merging the perspectives $x_1$ and $x_2$ could be given by the set of worlds $S_1\cap S_2$. But we could obviously say something more nuanced. Perhaps $X\in S_1$ but $X\notin S_2$. However, there is a world $X'$ that differs from $X$ in only one perspective, and $X'\in S_1\cap S_2$. Perhaps the merge set should include worlds like $X'$. Similar to our discussion of announcements, one can also modify this so that the update is $S_1\cap S_2$, unless this intersection is empty, in which case one selects some set of nearby worlds to $S_1\cup S_2$.

The most fleshed out project for future work involves using simplicial semantics to model Gossip Protocols, and to incorporate revision into those protocols. \cite{Birman07} The idea would be to use action models to model protocols as a first step. This already has had some exploration in the literature, including the simplicial semantics literature. \cite{SimpDEL,KaSC,KaG,ActionDC,CPM,CPaAM} Once a definition of action model is sufficiently worked out for our purposes, the second step would be to modify this definition of action model to update via revision. This, too, already has some history of work. \cite{BRDC1,BRDC2} With all of these pieces in play, the goal would be to then describe protocols where the update procedure is interpreted as revision. A possible future step after this would be to simulate said protocols, with the potential goal of comparing protocols with and without revision in order to assess the efficacy of revision as a learning mechanism in certain contexts.

\bibliographystyle{plain}	
\bibliography{Epistemology} 

\section{Proofs} \label{2app:prf}

This section largely sets out to prove a completeness theorem. For this theorem, along the way we show a categorical equivalence between a category of Kripke models and a category of simplicial models. Only one direction of this equivalence is needed for the proof of completeness. However, showing the full categorical equivalence gives a sense of what kind of ``information'' is preserved when translating back and forth between the two settings. It's also consistent with how such proofs are done in the literature. \cite{KaSC,SimpDEL}

\begin{proof}[Proof of Theorem \ref{2Thm:Comp}]\label{2prf:comp}
	The proof follows many of the standard techniques in the literature. \cite{KaSC}
	
	\begin{definition}[$\mathfrak{S}$]\label{2DefCS} Let $S_1$ and $S_2$ be \textit{UCF} simplicial models. We say that $f:S_1\rightarrow S_2$ is a morphism iff, treating $f$ as a function on $\mathcal{F}(S_1)$, we have that for any facets $X,Y\in S_1$ and $X\in S_{1,b}$, $Y\in S_{1,a}$ and $\pi_a(X)=\pi_a(Y)$ iff $f(Y)\in S_{2,a}$ and $\pi_a(f(X))=\pi_a(f(Y))$, and furthermore, $\bigcup_{a\in A}L(\pi_a(X))=\bigcup_{a\in A}L(\pi_a(f(X)))$. We say that the category whose objects are \textit{UCF} simplicial models with these morphisms is $\mathfrak{S}$.
	\end{definition}
	
	\begin{definition}[$\mathfrak{K}$]\label{2DefCK} Let $K_1$ and $K_2$ be transitive and Euclidean proper Kripke models such that for any world $w$ and atom $P$ true at that world, there exists an agent $a\in A$ such that if $wR_au$ and $w\in \ell(P)$, then $u\in \ell(P)$. It's easy to see that this makes \textbf{NU} sound. We call such Kripke models \textit{Ideal}. We say that $g:K_1\rightarrow K_2$ is a morphism iff, treating $g$ as a function on worlds, $wR_{a,1}u$ iff $g(w)R_{a,2}g(u)$, and $w\in \ell_1(P)$ iff $g(w)\in \ell_2(P)$. We say that the category whose objects are ideal Kripke models with these morphisms is $\mathfrak{K}$.
	\end{definition}
	
	\begin{lemma}\label{2ThmESK} $\mathfrak{S}$ and $\mathfrak{K}$ are equivalent categories.
	\end{lemma}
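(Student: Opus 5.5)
The plan is to construct two functors $\kappa:\mathfrak{S}\rightarrow\mathfrak{K}$ and $\sigma:\mathfrak{K}\rightarrow\mathfrak{S}$, following the standard construction in \cite{KaSC} and the belief-adapted version in \cite{BelSimp}. I will then show that $\kappa\circ\sigma$ and $\sigma\circ\kappa$ are naturally isomorphic to the respective identities. Equivalently, I would show that $\kappa$ is full, faithful and essentially surjective, using $\sigma$ only to witness essential surjectivity.

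\textbf{The functor $\kappa$.} Given a UCF simplicial belief model $\M=(N,V,L,S,\{S_a\})$, let the worlds be $\F(S)$. Set $X R_a Y$ iff $Y\in\F(S_a)$ and $\pi_a(X)=\pi_a(Y)$, and let $X\in\ell(P)$ iff some $L(\pi_b(X))(P)=1$.
\begin{itemize}
\item \textbf{Transitivity and Euclideanness.} Both follow because $R_a$ relates $X$ to exactly the $S_a$-facets sharing its $a$-node. The successor set of $X$ depends only on $\pi_a(X)$, and every successor has that same $a$-node.
\item \textbf{Ideality.} If $P$ holds at $X$ via the node $\pi_b(X)$, then every $R_b$-successor shares that node, so $P$ holds there too.
\item \textbf{Truth preservation.} A routine induction shows $\M,X\models\varphi$ iff $\kappa(\M),X\models\varphi$; this is the fact used for soundness.
\item \textbf{Morphisms.} On morphisms, $\kappa(f)=f$ viewed as a map on facets. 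The morphism clause of Definition \ref{2DefCS} translates verbatim into preservation and reflection of $R_a$. The condition on unions of labels gives preservation of $\ell$, since truth of $P$ at $X$ is determined by $\bigcup_a L(\pi_a(X))$.
\end{itemize}

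\textbf{The functor $\sigma$.} Given an ideal Kripke model $(W,\{R_a\},\ell)$, the main problem is to manufacture nodes. For each agent $a$, I take as $a$-nodes the classes of the relation $w\sim_a u$ iff $R_a(w)=R_a(u)$, where $R_a(w)$ is the successor set. I let $V$ send a class of $\sim_a$ to $a$. The label is $L([w]_a)(P)=1$ iff $R_a(w)\neq\emptyset$ and $R_a(w)\subseteq\ell(P)$ and $w\in\ell(P)$. Since $\sim_a$ only identifies worlds with equal successor sets, I may need to refine $\sim_a$ to equality of successor sets together with agreement on the atoms that $a$ carries. The facets are $X_w=\{[w]_a: a\in Ag\}$, and $S_a$ is generated by the $X_u$ with $u\in R_a(W)$.

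I must then check the following.
\begin{itemize}
\item \textbf{Distinct worlds give distinct facets.} If necessary I will use properness of the Kripke model to guarantee this.
\item \textbf{Consistency.} Each $X_w$ is consistent, because every literal assigned to a node of $X_w$ is true at $w$. So $\sigma$ lands in a subcomplex of the maximal complex.
\item \textbf{Recovery of the model.} $\pi_a(X_w)=\pi_a(X_u)$ with $X_u\in\F(S_a)$ iff $wR_au$. This uses transitivity and Euclideanness: $u\in R_a(W)$ and $R_a(w)=R_a(u)$ imply $u\in R_a(w)$. The converse also needs Euclideanness.
\item \textbf{Labels.} $P$ holds at $X_w$ iff $w\in\ell(P)$. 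The right-to-left direction is exactly where ideality, i.e.\ the semantic content of \textbf{NU}, is used: some agent $a$ has all $R_a$-successors satisfying $P$, so $P$ can be placed on $[w]_a$.
\end{itemize}

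\textbf{Equivalence.} Finally, $\kappa\sigma(K)\cong K$ via $w\mapsto X_w$, and $\sigma\kappa(\M)\cong\M$ via the identification of facets. Naturality is immediate because both functors act as the identity on the underlying facet and world maps.

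\textbf{Expected obstacle.} The main obstacle I anticipate is the label clause for $\sigma$ in worlds where $R_a(w)=\emptyset$, i.e.\ isolated perspectives. Ideality as stated holds vacuously there, so a true atom might have no node to sit on. I would first try to handle this by allowing the label on $[w]_a$ whenever $R_a(w)=\emptyset$. In that case the class $[w]_a$ must be split per world, making isolated nodes singletons, so that no inconsistency arises. Beyond that, the remaining work is bookkeeping.
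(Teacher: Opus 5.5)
Your construction is the same as the paper's: your $\kappa$ and $\sigma$ are its $F$ and $G$, and making isolated nodes singletons is fine. The step that fails is the labelling in $\sigma$, not the isolated-node case you flagged. In any $\kappa(\M)$, $R_a(X)=R_a(Y)\neq\emptyset$ already forces $\pi_a(X)=\pi_a(Y)$, since both equal the $a$-node of any common successor. This holds even when $X\notin R_a(X)$, i.e.\ when $a$ has a false belief at $X$. Your label rule depends on which representative of a class you use.

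Take $Ag=\{a,b\}$, worlds $w,v,t$, $R_a=\{(w,v),(v,v),(t,t)\}$, $R_b=\{(w,w),(v,t),(t,t)\}$, and $P$ true only at $v$. This model is K45 and ideal ($a$ witnesses $P$ at $v$). It is also proper in the strongest sense: distinct worlds differ in some agent's successor set.
\begin{itemize}
\item Your rule gives the node $[w]_a=[v]_a=\{w,v\}$ the atom $P$ when computed at $v$ but not at $w$.
\item Choosing value $1$ makes $X_w\models P$.
\item The class-invariant choice (all members satisfy $P$) leaves $X_v$ with no $P$-node, since $[v]_b=\{v,t\}$.
\item Refining $\sim_a$ by carried atoms separates $w$ from $v$, and then $wR_av$ is no longer recovered.
\end{itemize}
In fact $K\not\cong\kappa(\M)$ for every $\M$. $P$ at $X_v$ would have to sit on $\pi_a(X_v)=\pi_a(X_w)$ or on $\pi_b(X_v)=\pi_b(X_t)$, making $P$ true at $w$ or at $t$. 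Relatedly, $B_a(P\wedge\neg B_bP)\rightarrow(P\vee B_a\bot)$ is valid on two-agent simplicial belief models but fails at $w$. So with $\mathfrak{K}$ as defined, $\kappa$ is not essentially surjective and no $\sigma$ can work. The paper's $G$ uses the same labelling (``$P\in L(\pi_a(X_w))$ iff $P$ holds at all $R_a$-successors of $w$'') and has the same defect.

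The same forced sharing breaks your step ``distinct worlds give distinct facets'', and likewise the paper's claim that ideality yields a bijection. Properness in the paper's sense ($wR_au$ for all $a$ implies $w=u$) allows worlds $w,u,p,q$ with $R_a=\{(w,p),(u,p),(p,p)\}$ and $R_b=\{(w,q),(u,q),(q,q)\}$. Here your $\sigma$ gives $X_w=X_u$, and any realization would force $X_w=\{\pi_a(X_p),\pi_b(X_q)\}=X_u$.

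Your construction does work if $\mathfrak{K}$ is restricted to K45 models satisfying two conditions, both of which hold in every $\kappa(\M)$:
\begin{itemize}
\item[(i)] for $w\neq u$, some $a$ has $R_a(w)\neq R_a(u)$ or $R_a(w)=\emptyset$;
\item[(ii)] for $w\in\ell(P)$, some $a$ has $R_a(w)=\emptyset$ or $\{u : R_a(u)=R_a(w)\}\subseteq\ell(P)$.
\end{itemize}
Take as $a$-nodes the classes $\{u : R_a(u)=R_a(w)\}$ when $R_a(w)\neq\emptyset$, and singletons $\{w\}$ otherwise. Set $L(n)(P)=1$ iff every world in $n$ satisfies $P$. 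Then (i) gives distinct facets, and (ii) gives every true atom a node. Class-invariance of the labels gives the converse, and your relation-recovery argument is unchanged.

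Finally, read $\bigcup_a L(\pi_a(X))$ in the morphism condition as the set of atoms with value $1$, as the paper tacitly does. Otherwise $\sigma\kappa(\M)\not\cong\M$ whenever $\M$ has negative literals, which $\kappa$ forgets.
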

	
	\begin{proof}
		First we need to define two functors. For the first functor, define $F:\mathfrak{S}\rightarrow\mathfrak{K}$ such that for a \textit{UCF} model $S$, $F(S)$ is a Kripke model such that for each maximal facet $X$ in some $S_a$, there is a unique world $w_X$ in $F(S)$, and $w_X R_a w_Y$ iff $\pi_a(X)=\pi_a(Y)$ and $Y\in\mathcal{F}(S_a)$. By design, worlds and facets are in bijection. Furthermore, $w_X\in\ell(P)$ iff $P\in\bigcup_{x\in X}L(x)$. We need to show that this is an ideal Kripke model. It's obvious that $R_a$ is transitive and Euclidean for each $a\in A$. Now consider any atom $P$ and world $w_X$. Suppose that $w_X\in\ell(P)$. It follows that $P\in\bigcup_{x\in X}L(x)$. Fix $x\in X$ such that $P\in L(x)$. Since $S$ is $\textit{UCF}$, fix $a$ such that $V(x)=a$. Suppose further that $w_X R_a w_Y$. Then $x=\pi_a(X)=\pi_a(Y)$, and so $P\in L(\pi_a(Y))\subseteq\bigcup_{x\in Y}L(Y)$. It follows that $w_Y\in\ell(P)$, as desired. Finally, if $w_X R_a w_Y$ for all $a\in A$, then $\pi_a(X)=\pi_a(Y)$ for all $a\in A$. This means that $X=Y$, and so $w_X=w_Y$, which tells us that $F(S)$ is a proper Kripke model. This shows that we have an ideal Kripke model.
		
		Now we must show that $F$ as given above is a functor. Let $f:S_1\rightarrow S_2$ be a $\mathfrak{S}$ morphism. Then $F(f):F(S_1)\rightarrow F(S_2)$ is defined as follows: $F(f)(w_X)=w_{f(X)}$. We need to show that this is a $\mathfrak{K}$ morphism. Indeed, by design, $w_{f(X))}$ is a world in $F(S_2)$. Let $R_1$ be the relation in $F(S_1)$ and $R_2$ be the relation in $F(S_2)$. Suppose $w_X R_1^a w_Y$. This is true if and only if $\pi_a(X)=\pi_a(Y)$ and $Y\in\mathcal{F}(S_{1,a})$. Call this shared vertex $v$. This follows if and only if $f(v)\in f(X)\cap f(Y)$, and since $f(Y)$ is in $S_{2,a}$, this is true if and only if $w_{f(X)}R_2^aw_{f(Y)}$, as desired. Suppose $w_X\in\ell_1(P)$. This is true iff $P\in\bigcup_{x\in X}L(x)$. Because $\bigcup_{x\in X}L(x)=\bigcup_{x\in f(X)}L(x)$, this is true if and only if $P\in\bigcup_{x\in f(X)}L(x)$. This is true iff $w_{f(X)}\in\ell_2(P)$, as desired. This shows that $F$ is well defined.
		
		It's obvious that $F$ preserves the identity. We must show that it preserves composition to establish its functoriality. Let $f_1:S_1\rightarrow S_2$ and $f_2:S_2\rightarrow S_3$ be morphisms. Then the following holds:
		
		$$F(f_2\circ f_1)(w_X)=w_{f_2\circ f_1(X)}=w_{f_2(f_1(X))}=F(f_2)(w_{f_1(X)})=F(f_2)(F(f_1)(w_X))=\left (F(f_2)\circ F(f_1)\right )(w_X)$$
		
		From this it follows that $F(f_2\circ f_1)=F(f_2)\circ F(f_1)$, as desired. So, we have given our first functor, and shown that it is a functor.
		
		Now we must define the second functor. Define $G:\mathfrak{K}\rightarrow\mathfrak{S}$ such that for an ideal model $K$, $G(K)$ is a simplicial model where the facets of $S$ are of the form $X_w$ for each world $w$ in $K$, and the facets of each $S_a$ are of the form $X_w$ for each world $w$ in $K$ such that $w R_a w$. Because $K$ is ideal, worlds and facets are in bijection. We assume each facet has a unique vertex for each agent. The only further stipulation is that $\pi_a(X_w)=\pi_a(X_u)$ and $X_u\in\mathcal{F}(S_a)$ if and only if  $w R_a u$.\footnote{By the Euclidean axiom, we have that, if $wR_a u$, then $uR_a u$, so $X_u\in\mathcal{F}(S_a)$ is actually redundant as a condition.} For each atom $P$ true at $w$, we say that $P\in L(\pi_a(X_w))$ iff $P$ is true at all $u$ such that $wR_au$. By the fact that $K$ is ideal, we know such an $a$ exists. It is easy to see that this is a \textit{UCF} model. Indeed, the disjoint collection of facets $X_w$ is vacuously \textit{UCF}, and each association of two vertices preserves the \textit{UCF} property.
		
		Now we must show that $G$ as given above is a functor. Let $g:K_1\rightarrow K_2$ be a $\mathfrak{K}$ morphism. Then $G(g):G(K_1)\rightarrow G(K_2)$ is defined as follows: $G(g)(X_w)=X_{g(w)}$. We need to show this is a $\mathfrak{S}$ morphism. It suffices to show that $G$ preserves facets and the logical information. Since $g$ preserves reflexive edges, $G(g)$ therefore preserves facets in each $S_{1,a}$ in $G(K_1)$. Let $X$ be a facet in $S_1$ in $G(K_1)$. Fix $w$ in the worlds of $K_1$ such that $X=X_w$. Then $G(g)(X_w)=X_{g(w)}$ which is by definition a facet in $G(K_2)$, since $g(w)$ is a world in $K_2$. Now suppose $P\in L(\pi_a(X_w))$. This is true iff $u\in\ell(P)$ for all $u$ such that $w R_a u$. This is true iff $g(u)\in\ell(P)$ for all $u$ such that $w R_a g(u)$. This is true iff $P\in L(\pi_a(g(w)))$. This is true iff $P\in L(G(g)(\pi_a(X_w)))$, as desired. This shows that $G$ is well defined.
		
		It's obvious that $G$ preserves the identity. We must show that it preserves composition to establish its functoriality. Let $g_1:K_1\rightarrow K_2$ and $g_2:K_2\rightarrow K_3$ be morphisms. Then the following holds:
		
		$$G(g_2\circ g_1)(X_w)=X_{g_2\circ g_1(w)}=X_{g_2(g_1(w))}=G(g_2)(X_{g_1(w)})=G(g_2)(G(g_1)(X_w))=\left (G(g_2)\circ G(g_1)\right )(X_w)$$
		
		From this it follows that $G(g_2\circ g_1)=G(g_2)\circ G(g_1)$, as desired.
		
		What remains to be shown is that for any Kripke model $K$, $FG(K)$ is isomorphic to $K$ and for any Simplicial model $S$, $GF(S)$ is isomorphic to $S$. For the first case, note that the worlds in $FG(K)$ are of the form $w_{X_w}$ where $w$ is a world in $K$. The above demonstrated this identification is a bijection. Moreover, for any worlds $w$ and $u$ in $K$, $wR_au$ in $K$ iff $X_u$ is in $S_a$ and $\pi_a(X_w)=\pi_a(X_u)$. This is true iff $w_{X_w}R_au_{X_u}$, as desired. Moreover, $w\in\ell(P)$, iff there is some $a\in Ag$ such that $P\in L(\pi_a(X_w))$, which is true iff $w_{X_w}\in\ell(P)$. This shows that $K$ and $FG(K)$ are isomorphic.
		
		Suppose $X$ and $Y$ are facets in $S$, $\pi_a(X)=\pi_a(Y)$ and $Y\in S_a$. This is true iff $w_X R_a w_Y$, which, because $Y\in S_a$, and so $w_Y R_a w_Y$, is true iff $\pi_a(X_{w_X})=\pi_a(Y_{w_Y})$ and $Y_{w_Y}\in\mathcal{F}(S_a)$ for $GF(S)$. Suppose instead that $\bigcup_{a\in A}L(\pi_a(X))=\bigcup_{a\in A}L(\pi_a(Y))$ This is true iff $w_X,w_Y\in\ell(P)$. Because $F(S)$ is ideal, for each $P$, there is an agent $a\in Ag$ such that if $w\in\ell(P)$, then for all $u$ such that $wR_au$, $u\in\ell(P)$. By definition of $G$, this means that $\bigcup_{a\in A}L(\pi_a(X_{w_X}))=\bigcup_{a\in A}L(\pi_a(Y_{w_Y}))$. This shows that $S$ and $GF(S)$ are isomorphic.
	\end{proof}
	
	Now, we need to show that $F$ and $G$ are logic preserving functors. This is a simple structural induction.
	
	Suppose $P$ is an atomic formula, $K$ is a Kripke model, and $S$ is a simplicial model. Suppose further that $K,w\vDash P$ and $S,X\vDash P$. Then, $w\in\ell(P)$ and there is $a\in Ag$ such that $P\in L(\pi_a(X))$. So, $w_X\in\ell(P)$, and $P\in L(\pi_a(X_w))$, showing that $F$ and $G$ preserve atomic formulas.
	
	Assume the obvious inductive hypothesis for formulas up to depth $n$. The only interesting case is the modal one. Let $\varphi$ be a formula of depth $n$. Suppose that $K,w\vDash B_a\varphi$ and $S,X\vDash B_a\varphi$. We will handle the Kripke case first. This means that, for all $u$ such that $wR_au$, $K,u\vDash\varphi$. By the IH, and the definition of $G$, for all facets $Y\in S_a$ such that $\pi_a(Y)=\pi_a(X_u)$, $G(K),X_u\vDash\varphi$. The result follows. Because $S,X\vDash B_a\varphi$, then for all $Y\in S_a$ such that $\pi_a(X)=\pi_a(Y)$, then $S,Y\vDash\varphi$. By the IH, and the definition of $F$, $F(S),w_Y\vDash\varphi$. The result follows.
	
	The next step is to construct a canonical \textbf{K45+NU} frame model, using the usual ``unboxing'' method. That is, construct a model $K_C$ whose set of worlds $W_C$ is the set of all maximal consistent \textbf{K45+NU} sets of formulas, $\ell(P)=\{w\in W_C|P\in w\}$, and $wR_au$ if and only if all formulas $\varphi$, if $B_a\varphi\in w$, then $\varphi\in u$. It is easy to show that this model is such that $K_C,w\vDash\varphi$ if and only if $\varphi\in w$. In general, this model is not proper, but we can apply the translation from \cite{Note1} to generate a bisimilar model which is. Then, applying $G$ to this proper model gives us a canonical simplicial model. Completeness follows.
	
\end{proof}
	
\end{document}